\providecommand{\U}[1]{\protect\rule{.1in}{.1in}}
\newtheorem{theorem}{Theorem}
\newtheorem{definition}[theorem]{Definition}
\newtheorem{lemma}[theorem]{Lemma}
\newtheorem{problem}[theorem]{Problem}
\newtheorem{proposition}[theorem]{Proposition}
\newtheorem{remark}[theorem]{Remark}
\newenvironment{proof}[1][Proof]{\noindent\textbf{#1.} }{\ \rule{0.5em}{0.5em}}
\newcommand{\ket}[1]{\left\vert #1 \right\rangle}
\newcommand{\bra}[1]{\left\langle #1 \right\vert}
\begin{document}

\title{\textbf{Two-message quantum interactive proofs and the quantum separability
problem}}
\author{Patrick Hayden, Kevin Milner, and Mark M. Wilde\\\textit{School of Computer Science,}\\\textit{McGill University,}\\\textit{Montreal, Quebec H3A 2A7, Canada}}
\maketitle

\begin{abstract}
Suppose that a polynomial-time mixed-state quantum circuit, described as a
sequence of local unitary interactions followed by a partial trace, generates
a quantum state shared between two parties. One might then wonder, does this
quantum circuit produce a state that is separable or entangled? Here, we give
evidence that it is computationally hard to decide the answer to this
question, even if one has access to the power of quantum computation. We begin
by exhibiting a two-message quantum interactive proof system that can decide
the answer to a promise version of the question. We then prove that the
promise problem is hard for the class of promise problems with
\textquotedblleft quantum statistical zero knowledge\textquotedblright%
\ (\textsf{QSZK}) proof systems by demonstrating a polynomial-time Karp
reduction from the \textsf{QSZK}-complete promise problem \textquotedblleft
quantum state distinguishability\textquotedblright\ to our quantum
separability problem. By exploiting Knill's efficient encoding of a matrix
description of a state into a description of a circuit to generate the state,
we can show that our promise problem is \textsf{NP}-hard with
respect to Cook reductions. Thus, the quantum
separability problem (as phrased above) constitutes the first nontrivial
promise problem decidable by a two-message quantum interactive proof system
while being hard for both \textsf{NP} and \textsf{QSZK}. We also consider
a variant of the problem, in which a given polynomial-time mixed-state
quantum circuit accepts a quantum state as input, and the question is to
decide if there is an input to this circuit which makes its output separable
across some bipartite cut. We prove that this problem is a complete
promise problem for the class \textsf{QIP} of problems decidable by quantum
interactive proof systems. Finally, we show that a two-message quantum interactive
proof system can also decide a multipartite generalization of the quantum separability problem.

\end{abstract}

\section{Introduction}

Quantum entanglement plays a central role in quantum information science
\cite{HHHH09}. It is believed to be one of the reasons behind the
computational power of quantum algorithms \cite{EJ98}, it can enhance the
communication capacities of channels in fascinating ways
\cite{BSST99,BSST02,CLMW10,RLMKR11}, most notably as exhibited in quantum
teleportation \cite{PhysRevLett.70.1895}\ and super-dense coding
\cite{PhysRevLett.69.2881}, and it is a resource for device-independent
quantum key distribution \cite{Ekert:1991:661,VV12}. For these reasons and
others, the characterization and systematic understanding of entanglement have
long been important goals of quantum information and complexity theory.

Every quantum state has a mathematical description as a density operator
$\rho$, which is a unit-trace, positive semidefinite operator acting on some
Hilbert space $\mathcal{H}$. (In this work, we restrict ourselves to
finite-dimensional Hilbert spaces.) If the Hilbert space $\mathcal{H}$\ has a
factorization as a tensor product $\mathcal{H}_{A}\otimes\mathcal{H}_{B}$ of
two Hilbert spaces $\mathcal{H}_{A}\ $and $\mathcal{H}_{B}$, then we write the
state $\rho$ as $\rho_{AB}$ and say that it is \textit{separable} if it admits
a decomposition of the following form:%
\begin{equation}
\rho_{AB}=\sum_{z\in\mathcal{Z}}p_{Z}\left(  z\right)  \ \sigma_{A}^{z}%
\otimes\tau_{B}^{z},\label{eq:sep-state}%
\end{equation}
for collections $\left\{  \sigma_{A}^{z}\right\}  $ and $\left\{  \tau_{B}%
^{z}\right\}  $ of quantum states and some probability distribution
$p_{Z}\left(  z\right)  $ over an alphabet$~\mathcal{Z}$ \cite{W89}. Let
$\mathcal{S}$ denote the set of separable states. Often, we think of the
systems $A$ and $B$ (corresponding to Hilbert spaces $\mathcal{H}_{A}\ $and
$\mathcal{H}_{B}$) as being spatially separated, with an experimentalist Alice
possessing system $A$, while her colleague Bob possesses system $B$. The
intuition behind the above definition of separability is that a separable
state can be prepared without any quantum interaction between systems $A$
and$~B$. That is, there is an effectively classical procedure by which they
can prepare a separable state:\ Alice selects a classical variable $z$
according to $p_{Z}\left(  z\right)  $, prepares the state $\sigma^{z}$ in her
lab, and sends Bob the variable $z$ so that he can prepare the state $\tau
^{z}$ in his lab. After this process, they both discard the variable $z$, so
that (\ref{eq:sep-state}) describes their shared quantum state. It is clear
from inspecting (\ref{eq:sep-state}) that the set of separable states is a
convex set, and that any separable state has the following form as well:%
\begin{equation}
\rho_{AB}=\sum_{x\in\mathcal{X}}p_{X}\left(  x\right)  \ \left\vert \psi
_{x}\right\rangle \left\langle \psi_{x}\right\vert _{A}\otimes\left\vert
\phi_{x}\right\rangle \left\langle \phi_{x}\right\vert _{B}%
,\label{eq:separable-state-pure-decomp}%
\end{equation}
for collections $\left\{  \left\vert \psi_{x}\right\rangle _{A}\right\}  $ and
$\left\{  \left\vert \phi_{x}\right\rangle _{B}\right\}  $ of pure states and
some probability distribution $p_{X}\left(  x\right)  $ over an
alphabet$~\mathcal{X}$ of size no larger than $\dim\left(  \mathcal{H}\right)
^{2}$, the cardinality bound following from Caratheodory's
theorem \cite{H97}. Finally, if the state $\rho_{AB}$ does not admit a
decomposition of the above form, then it is \textit{entangled}---it can only
be prepared by a quantum interaction between systems $A$ and $B$.

Given the many applications of entanglement, it is clearly important to be
able to decide if a particular bipartite state is separable or entangled. When
the state is specified as the rational entries of a density matrix acting on a
finite-dimensional Hilbert space $\mathcal{H}=\mathcal{H}_{A}\otimes
\mathcal{H}_{B}$, one can formulate several variations of the problem, all of
them being known collectively as the \textit{quantum separability problem},
and characterize their computational complexity \cite{G03,G10,BCY11a}. (See
Ref.~\cite{I07}\ for a useful, though now somewhat outdated review.) Gurvits
proved that it is \textsf{NP}-hard (with respect to Cook reductions)
to decide if a state $\rho_{AB}%
\in\mathcal{S}$ or if%
\[
\min_{\sigma_{AB}\in\mathcal{S}}\left\Vert \rho_{AB}-\sigma_{AB}\right\Vert
_{2}\geq\varepsilon,
\]
where $\left\Vert A\right\Vert _{2}\equiv\sqrt{\text{Tr}\{A^{\dag}A\}}$ is the
Hilbert-Schmidt norm and $\varepsilon$ is some positive number larger than an
inverse exponential in $\dim\left(  \mathcal{H}\right)  $. Gharibian later
improved upon this result by showing that this formulation of the quantum
separability problem is strongly \textsf{NP}-hard with respect to
Cook reductions---it is still \textsf{NP}%
-hard even if $\varepsilon$ is promised to be larger than an inverse
polynomial in $\dim\left(  \mathcal{H}\right)  $. Brand\~{a}o, Christandl, and
Yard then offered a quasi-polynomial time algorithm that decides the quantum
separability problem if it is promised that $\varepsilon$ is a positive
constant~\cite{BCY11a}, by appealing to their Pinsker-inequality-like lower
bound on the squashed entanglement \cite{BCY11}\ and to the $k$-extendibility
separability test of Doherty \textit{et al}.~\cite{DPS02,DPS04}. (These
concepts are defined later in our paper.) They also considered a variant of
the promise problem where the Hilbert-Schmidt distance is replaced by the
one-way LOCC\ distance \cite{MWW09}, which characterizes the
distinguishability of $\rho_{AB}$ and $\mathcal{S}$ if Alice and Bob are
allowed to perform local operations and to send one message of classical
communication (from either Alice to Bob or Bob to Alice).

In the circuit model of quantum computation, quantum states are generated by
unitary circuits acting on some number of qubits (with some of them being
traced out in the mixed-state circuit model
\cite{Aharonov:1998:QCM:276698.276708}), and we measure the complexity of a
quantum computation by how the circuit size (number of gates and wires) scales
with the length of the input \cite{W09}. (Note that if the circuit size is
polynomial in the input length, then the number of qubits on which the circuit
acts is likewise polynomial in the input length.) Thus, from the perspective of quantum
computational complexity theory \cite{W09}, one might consider the prior
formulations of the quantum separability problem to be somewhat restrictive.
The reason is the same as that given in~\cite{R09}: the mathematical
description of a bipartite quantum state is polynomial in the dimension of the
Hilbert space, but this Hilbert space is exponential in the number of qubits
in the state. Thus, the matrix representation is exponentially larger than it
needs to be when we are in the setting of the circuit model of quantum
computation. Also, the circuit model is natural
physically, as the evolution induced by a time-varying
two-body Hamiltonian can be efficiently described by a quantum circuit \cite{berry2007efficient}.

With this dual computational and physical motivation in mind, we take an approach to the quantum separability
problem along the above lines. We suppose that we are given a description of a
quantum circuit of polynomial size\ as a sequence of quantum gates chosen from
some finite gate set, and the circuit acts on a number of qubits, each in the
state$~\left\vert 0\right\rangle $. As part of the description, the qubits are
divided into three sets:\ a set of reference system\ qubits which are traced
out, a set of qubits which belongs to Alice, and another set which belongs to
Bob. Figure~\ref{fig:circuit-for-state}\ depicts such a circuit.
\begin{figure}
[t]
\begin{center}
\includegraphics[
natheight=1.633600in,
natwidth=1.507400in,
width=1.535in
]%
{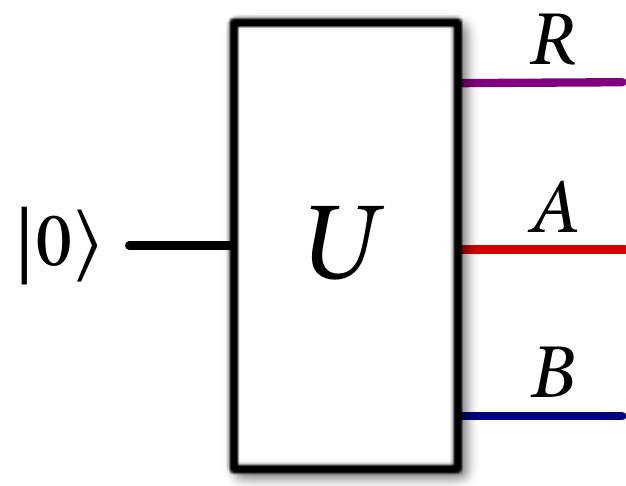}%
\caption{A unitary circuit which generates a bipartite state $\rho_{AB}$ on
systems $A$ and $B$. The qubits in the reference system $R$ are traced out.}%
\label{fig:circuit-for-state}%
\end{center}
\end{figure}
We also impose the following promise regarding the state $\rho_{AB}$ generated
by the circuit:\ there is either some state in the set of separable states
which is $\varepsilon_{1}$-close to $\rho_{AB}$ in the trace distance:%
\begin{equation}
\min_{\sigma_{AB}\in\mathcal{S}}\left\Vert \rho_{AB}-\sigma_{AB}\right\Vert
_{1}\leq\varepsilon_{1},\label{eq:YES-instance-err}%
\end{equation}
or every state in this set is at least $\varepsilon_{2}$-away from $\rho_{AB}$
in the one-way LOCC distance:%
\begin{equation}
\min_{\sigma_{AB}\in\mathcal{S}}\left\Vert \rho_{AB}-\sigma_{AB}\right\Vert
_{1-\text{LOCC}} \geq \varepsilon_{2},\label{eq:NO-instance-err}%
\end{equation}
and the difference $\varepsilon_{2}-\varepsilon_{1}$ is larger than a sufficiently large inverse
polynomial in the circuit size. The reason for the asymmetric choice of norms will become clear later in the article, but for the moment we will simply note that the problem is well-defined: the two sets identified by (\ref{eq:YES-instance-err}) and (\ref{eq:NO-instance-err}) do
not overlap because $\left\Vert \ast\right\Vert
_{1-\text{LOCC}}\leq\left\Vert \ast\right\Vert _{1}$ \cite{MWW09}. Let
\textsf{QSEP-STATE}$_{\operatorname{1,1-LOCC}}$$\left(  \varepsilon_{1},\varepsilon_{2}\right)  $ denote
this promise problem, and we will also refer to it throughout this paper as
both \textsf{QSEP-STATE}$_{\operatorname{1,1-LOCC}}$\ and the quantum separability
problem.\footnote{In an earlier version of this paper \cite{HMW13,HMW13a},
we referred to this promise problem as \textsf{QSEP-CIRCUIT}, rather
than \textsf{QSEP-STATE}$_{\operatorname{1,1-LOCC}}$. We opted for the name
change because the name \textsf{QSEP-CIRCUIT} is ambiguous: it does
not suggest that the circuit is acting to generate a quantum state (rather than acting
as an isometry or a quantum channel), nor does it indicate
that we specify the promises with respect to the trace distance and the 1-LOCC distance. In a later
work \cite{MGHW13}, we found many different entanglement detection problems that are complete
for several complexity classes in the quantum interactive proof hierarchy, and for several
of these problems, both promises are given with respect to the more natural trace distance measure.}

Quantum interactive
proof systems \cite{W03,KW00,W09} provide a  natural framework for analyzing \textsf{QSEP-STATE}$_{\operatorname{1,1-LOCC}}$. The idea is that a \textit{verifier}, who
has access to a polynomial-time bounded quantum computer, can exchange quantum
messages with a \textit{prover}, who has access to a computationally unbounded
quantum computer, in order to be convinced by the prover of the truth of some
statement. Crucial in this setting are the notions of \textit{interaction}
and\ \textit{verification}: in the case of a positive instance of a problem,
there exists some strategy by which the prover can convince the verifier to
accept with high probability, whereas in the case of a negative instance, with
high probability, there is nothing that the prover can do to convince the
verifier to accept. One appeal of the quantum setting is that it is in some ways
conceptually simpler than its classical counterpart, thanks to Kitaev and Watrous' demonstration that it is possible to parallelize
any quantum interactive proof system to at most three messages between
the verifier and prover \cite{KW00}.
Their result leaves just four natural complexity classes
derived from quantum interactive proof systems with a single prover:

\begin{itemize}
\item \textsf{QIP(3)=QIP}---the full power of quantum interactive proof systems
(known to be equivalent to \textsf{PSPACE} \cite{jain2010qip}),

\item \textsf{QIP(2)}---there are two messages exchanged (verifier-to-prover
followed by prover-to-verifier),

\item \textsf{QIP(1)}---the prover sends a quantum state to the verifier (this
class is more commonly known as \textsf{QMA}),

\item \textsf{QIP(0)}---the verifier proceeds on his own (this is equivalent
to \textsf{BQP}, the class of problems efficiently decidable on a quantum
computer with bounded error).
\end{itemize}

It has been remarked that \textsf{QIP(2)} appears to be the \textquotedblleft
most mysterious\textquotedblright\ of these complexity classes \cite{JUW09}. At the very least, not much is known about it.

\section{Overview of Results}

In this paper, we contribute the following results:

\begin{itemize}
\item \textsf{QSEP-STATE}$_{\operatorname{1,1-LOCC}}$ is decidable by a two-message quantum interactive
proof system for a wide range of parameters (this result is stated
as Theorem~\ref{thm:QSEP-in-QIP(2)}
and informally stated in the
discussion surrounding (\ref{eq:YES-instance-err})-(\ref{eq:NO-instance-err})).
The proof system builds upon the
approach of Brand\~{a}o \textit{et al}.~\cite{BCY11a} and the notion of
$k$-extendibility \cite{DPS02,DPS04}. In particular, the verifier generates
the state $\rho_{AB}$ using its description as a quantum circuit and then sends the
reference system to the prover. In the case of a positive instance, the state
is separable (or close to it), and the prover can generate a purification of a
$k$-extension of the state by acting with a unitary operation on the reference
system and some ancilla qubits. The prover sends all of his output qubits back
to the verifier, who then performs phase estimation over the symmetric group
\cite{K95} (also known as the \textquotedblleft permutation
test\textquotedblright\ \cite{BBDEJM97,KNY08})\ in order to verify whether the
state sent by the prover is a $k$-extension. This proof system has
completeness and soundness error directly related to $\varepsilon_{1}$ in
(\ref{eq:YES-instance-err}) and $\varepsilon_{2}$\ in
(\ref{eq:NO-instance-err}).

\item \textsf{QSEP-STATE}$_{\operatorname{1,1-LOCC}}$ is hard for the class of promise problems
decidable by a quantum statistical zero-knowledge (\textsf{QSZK}) proof system
\cite{W02,W09zkqa}. A \textsf{QSZK} proof system is similar to a quantum
interactive proof system, with the exception that in the case of a positive
instance of the problem, the verifier should be able to generate the state on
his systems with a polynomial-time quantum computer (so that he has not
generated any state which he could have not have generated already on his
own).\footnote{This is the definition of \textsf{QSZK}\ in the case when the verifier
behaves honestly. Watrous later showed that the expressive power of this class
is the same as the case in which the verifier does not behave honestly
\cite{W09zkqa}.} We prove this result by a reduction from the \textsf{QSZK}%
-complete promise problem \textsf{QUANTUM-STATE-DISTINGUISHABILITY}\ to
\textsf{QSEP-STATE}$_{\operatorname{1,1-LOCC}}$. This reduction is somewhat similar to a previous
reduction of Rosgen and Watrous \cite{RW05}, but the setting here is different
and thus requires a different analysis (though interestingly, the analysis is
reminiscent of arguments used in quantum information theory
\cite{ADHW06FQSW,D05}).

\item \textsf{QSEP-STATE}$_{\operatorname{1,1-LOCC}}$ is \textsf{NP}-hard with respect
to Cook reductions. This result follows from the
fact that the matrix version of the quantum separability problem is
\textsf{NP}-hard, though we require some results of Knill which show that one
can encode a quantum state efficiently by a unitary circuit if given a matrix
description of the state \cite{Kn95}.

\item We consider a variation of \textsf{QSEP-STATE}$_{\operatorname{1,1-LOCC}}$, called
\textsf{QSEP-CHANNEL}$_{\operatorname{1,1-LOCC}}$, in which some of the inputs to the circuit are
arbitrary and some are fixed in the state $\left\vert 0\right\rangle $. The
output qubits are again divided into three sets: the reference system qubits
which are traced out, Alice's qubits, and Bob's qubits. We can think of this
circuit as a quantum channel $\mathcal{N}_{S\rightarrow AB}$ with an input
system $S$ and two output systems $A$ and $B$. The task is then to decide if
there exists an input to the circuit such that the output state shared between
Alice and Bob is separable, and we have the promise that either%
\[
\min_{\rho_{S},\ \sigma_{AB}\in\mathcal{S}}\left\Vert \mathcal{N}%
_{S\rightarrow AB}\left(  \rho_{S}\right)  -\sigma_{AB}\right\Vert _{1}%
\leq\varepsilon_{1},
\]
or%
\[
\min_{\rho_{S},\ \sigma_{AB}\in\mathcal{S}}\left\Vert \mathcal{N}%
_{S\rightarrow AB}\left(  \rho_{S}\right)  -\sigma_{AB}\right\Vert
_{1-\text{LOCC}} \geq \varepsilon_{2},
\]
where again the difference $\varepsilon_{2}-\varepsilon_{1}$ is larger than a sufficiently large inverse polynomial in the circuit size. We show that this promise problem is
\textsf{QIP}-complete. The reasoning here is similar to the reasoning for our
earlier results for \textsf{QSEP-STATE}$_{\operatorname{1,1-LOCC}}$, and we again exploit the results of
Rosgen and Watrous \cite{RW05} (in particular, the fact that
\textsf{QUANTUM-CIRCUIT-DISTINGUISHABILITY}\ is \textsf{QIP}-complete). We
will also refer to this promise problem throughout as the channel quantum
separability problem.

\item \textsf{MULTI-QSEP-STATE}$_{\operatorname{1,1-LOCC}}$, a multipartite generalization of
\textsf{QSEP-STATE}$_{\operatorname{1,1-LOCC}}$, is also decidable by a two-message quantum interactive
proof system for a wide range of parameters. The analysis of the proof system
exploits a recent quantum de Finetti theorem of
Brand\~{a}o and Harrow~\cite{BH12} along with the analysis used
in the proof system for \textsf{QSEP-STATE}$_{\operatorname{1,1-LOCC}}$. \textsf{MULTI-QSEP-STATE}$_{\operatorname{1,1-LOCC}}$ is also
\textsf{NP}- and \textsf{QSZK}-hard
because \textsf{QSEP-STATE}$_{\operatorname{1,1-LOCC}}$ trivially reduces to it, as
\textsf{QSEP-STATE}$_{\operatorname{1,1-LOCC}}$ is merely a special case of \textsf{MULTI-QSEP-STATE}$_{\operatorname{1,1-LOCC}}$.
\end{itemize}

The paper is structured as follows. In the next section, we introduce
preliminary concepts such as quantum states and channels, distance measures
between quantum states, $k$-extendibility, three- and two-message quantum
interactive proof systems and complete promise problems for their
corresponding complexity classes. Section~\ref{sec:qsep-in-qip2}\ contains our
first result, in which we demonstrate that a two-message quantum interactive
proof system decides \textsf{QSEP-STATE}$_{\operatorname{1,1-LOCC}}$. We show in
Section~\ref{sec:qsep-qszk-hard}\ that \textsf{QSEP-STATE}$_{\operatorname{1,1-LOCC}}$\ is hard for the
class \textsf{QSZK}, and in Section~\ref{sec:qsep-np-hard}, we show that it is
\textsf{NP}-hard as well. In Section$~$\ref{sec:channel-qsep-qip-complete}, we
extend the above results to show that the channel variation of the quantum
separability problem is \textsf{QIP}-complete. Section~\ref{sec:qmultisep-in-qip2}
gives a two-message quantum interactive proof system that decides 
\textsf{MULTI-QSEP-STATE}$_{\operatorname{1,1-LOCC}}$. Finally, we conclude in
Section~\ref{sec:conclusion}\ with a summary and some open directions.

\section{Preliminaries}

\label{sec:prelim}In this preliminary section, we review background concepts
from quantum information theory \cite{NC00}\ and quantum computational
complexity theory \cite{W09}.

\subsection{Quantum states and channels}

A quantum state is a positive semidefinite, unit-trace operator (referred to
as the density operator)\ acting on some Hilbert space $\mathcal{H}$. Let
$\mathcal{D}(\mathcal{H})$\ denote the set of density operators acting on a
Hilbert space$~\mathcal{H}$. An \textit{extension} of a quantum state $\rho
\in\mathcal{D}(\mathcal{H}_{A})$ is some state $\omega\in\mathcal{D}%
(\mathcal{H}_{A}\otimes\mathcal{H}_{B})$ (on a larger Hilbert space) such that
$\rho=\ $Tr$_{\mathcal{H}_{B}}\left\{  \omega\right\}  $. A quantum state is
pure if its density operator is unit rank, in which case it has an
equivalent representation as a unit vector $\left\vert \psi\right\rangle
\in\mathcal{H}$. A \textit{purification} of a density operator $\rho
\in\mathcal{D}(\mathcal{H})$ is a pure extension of $\rho$.
Throughout this work, we restrict ourselves to finite-dimensional Hilbert
spaces, so that a $d$-dimensional Hilbert space is isomorphic to
$\mathbb{C}^{d}$. A quantum channel is a linear, completely positive,
trace-preserving (CPTP) map $\mathcal{N}:\mathcal{D}(\mathcal{H}_{\text{in}%
})\rightarrow\mathcal{D}(\mathcal{H}_{\text{out}})$. The Stinespring
representation theorem states that every CPTP\ map can be realized by
tensoring its input with an ancillary environment system in some fiducial
state $\left\vert 0\right\rangle _{E}\in\mathcal{H}_{E}$ where $\dim\left(
\mathcal{H}_{E}\right)  \leq\dim(\mathcal{H}_{\text{in}})\dim(\mathcal{H}%
_{\text{out}})$, performing some unitary operation on the joint Hilbert space
$\mathcal{H}_{\text{in}}\otimes\mathcal{H}_{E}$, factoring the unitary's
output Hilbert space as $\mathcal{H}_{\text{out}}\otimes\mathcal{H}%
_{E^{\prime}}$, and finally tracing over the Hilbert space $\mathcal{H}%
_{E^{\prime}}$ \cite{S55}. That is, for every CPTP\ map $\mathcal{N}$, there
exists some unitary $U$ such that the following relation holds for all
$\rho\in\mathcal{D}(\mathcal{H}_{\text{in}})$:%
\[
\mathcal{N}\left(  \rho\right)  =\text{Tr}_{E^{\prime}}\left\{  U\left(
\rho\otimes\left\vert 0\right\rangle \left\langle 0\right\vert _{E}\right)
U^{\dag}\right\}  .
\]
This theorem is the essential reason for the equivalence in computational
power between the unitary and mixed-state circuit models of quantum
computation \cite{Aharonov:1998:QCM:276698.276708}.

\subsection{Distance measures}

The trace norm of an operator $A$ is $\left\Vert A\right\Vert _{1}\equiv
\ $Tr$\{\sqrt{A^{\dag}A}\}$. The metric on quantum states induced by the trace
norm is called the trace distance, which has an operational interpretation as the
bias when using an optimal measurement to distinguish two quantum states
$\rho$ and $\sigma$ \cite{NC00}. That is, when $\rho$ and $\sigma$ are chosen
uniformly at random, the probability $p_{\text{succ}}$ of successfully
discriminating them with an optimal measurement is as follows:%
\[
p_{\text{succ}}=\frac{1}{2}\left(  1+\frac{1}{2}\left\Vert \rho-\sigma
\right\Vert _{1}\right)  .
\]
There is also a variational characterization of the trace distance as%
\[
\left\Vert \rho-\sigma\right\Vert _{1}=2\max_{0\leq\Lambda\leq I}%
\text{Tr}\left\{  \Lambda\left(  \rho-\sigma\right)  \right\}  ,
\]
which leads to the following useful inequality that holds for all $\Gamma$
such that $0\leq\Gamma\leq I$:%
\begin{equation}
\text{Tr}\left\{  \Gamma\rho\right\}  \geq\text{Tr}\left\{  \Gamma
\sigma\right\}  -\left\Vert \rho-\sigma\right\Vert _{1}%
.\label{eq:trace-inequality}%
\end{equation}

Suppose now that $\rho_{AB}$ and $\sigma_{AB}$ are in $\mathcal{D}\left(
\mathcal{H}_{A}\otimes\mathcal{H}_{B}\right)  $. Then the one-way local
operations and classical communication (1-LOCC) distance between these two
states, induced by a 1-LOCC norm \cite{MWW09}, is given\ by%
\[
\left\Vert \rho_{AB}-\sigma_{AB}\right\Vert _{1-\text{LOCC}}\equiv
\max_{\Lambda_{B\rightarrow X}}\left\Vert \left(  I_{A}\otimes\Lambda
_{B\rightarrow X}\right)  \left(  \rho_{AB}-\sigma_{AB}\right)  \right\Vert
_{1},
\]
where the maximization on the RHS\ is over all quantum-to-classical channels%
\[
\Lambda_{B\rightarrow X}\left(  \omega\right)  \equiv\sum_{x\in\mathcal{X}%
}\text{Tr}\left\{  \Lambda_{x}\omega\right\}  \left\vert x\right\rangle
\left\langle x\right\vert
\]
with $\Lambda_{x}\geq0$ for all $x\in\mathcal{X}$, $\sum_{x\in\mathcal{X}%
}\Lambda_{x}=I$, and $\left\{  \left\vert x\right\rangle \right\}  $ some
orthonormal basis. (Note that we could also define the 1-LOCC\ distance with
respect to measurement maps on Alice's system, which gives a value generally
different from the above.)\ The 1-LOCC distance is equal to the bias in
success probability when using an optimal one-way LOCC protocol to distinguish
the states:%
\[
p_{\text{succ,1-LOCC}}=\frac{1}{2}\left(  1+\frac{1}{2}\left\Vert \rho
_{AB}-\sigma_{AB}\right\Vert _{1-\text{LOCC}}\right)  .
\]
This distance is the natural distance measure in
the setting of Bell experiments \cite{bell1964}\ or quantum teleportation, for
example. Also, it follows that%
\begin{equation}
\left\Vert \rho-\sigma\right\Vert _{1-\text{LOCC}}\leq\left\Vert \rho
-\sigma\right\Vert _{1},\label{eq:difference-trace-LOCC}%
\end{equation}
because a general protocol to distinguish $\rho$ from $\sigma$ never performs
worse than one restricted to one-way LOCC\ operations.

The quantum fidelity $F\left(  \rho,\sigma\right)  $ between two quantum
states $\rho$ and $\sigma$ is another measure of distinguishability, defined
as follows:%
\begin{equation}
F\left(  \rho,\sigma\right)  \equiv\left\Vert \sqrt{\rho}\sqrt{\sigma
}\right\Vert _{1}^{2}.\label{eq:fidelity-norm}%
\end{equation}
Uhlmann characterized the fidelity as the optimal squared overlap between any
two purifications of $\rho$ and $\sigma$ \cite{U73}:%
\[
F\left(  \rho,\sigma\right)  =\max_{\left\vert \phi_{\rho}\right\rangle
,\left\vert \phi_{\sigma}\right\rangle }\left\vert \left\langle \phi_{\rho
}|\phi_{\sigma}\right\rangle \right\vert ^{2}.
\]
Thus, the fidelity has an operational interpretation as the optimal
probability with which a purification of $\rho$ would pass a test for being a
purification of $\sigma$. Since all purifications are related by a unitary
operation on the purifying system, this characterization is equivalent to the
following one:%
\begin{equation}
F\left(  \rho,\sigma\right)  =\max_{U}\left\vert \left\langle \phi_{\rho
}|\left(  U\otimes I_{\mathcal{H}}\right)  |\phi_{\sigma}\right\rangle
\right\vert ^{2},\label{eq:uhlmann-thm}%
\end{equation}
where $\left\vert \phi_{\rho}\right\rangle $ and $\left\vert \phi_{\sigma
}\right\rangle $ are now two fixed purifications of $\rho$ and $\sigma$,
respectively, and the optimization is over all unitaries acting on the
purifying system (the fact that (\ref{eq:fidelity-norm}) is equal to
(\ref{eq:uhlmann-thm}) is often referred to as Uhlmann's theorem). The
fidelity and trace distance are related by the Fuchs-van-de-Graaf
inequalities~\cite{FvG99}:%
\begin{equation}
1-\sqrt{F\left(  \rho,\sigma\right)  }\leq\frac{1}{2}\left\Vert \rho
-\sigma\right\Vert _{1}\leq\sqrt{1-F\left(  \rho,\sigma\right)  }%
.\label{eq:FvG-ineqs}%
\end{equation}

\subsection{Separability, $k$-extendibility, and the maximum $k$-extendible
fidelity}

A bipartite state $\rho_{AB}\in\mathcal{D}\left(  \mathcal{H}_{A}%
\otimes\mathcal{H}_{B}\right)  $ is $k$-extendible \cite{DPS02,DPS04}\ if
there exists a state $\omega_{AB_{1}\cdots B_{k}}\in\mathcal{D}\left(
\mathcal{H}_{A}\otimes\mathcal{H}_{B_{1}}\otimes\cdots\otimes\mathcal{H}%
_{B_{k}}\right)  $ such that

\begin{enumerate}
\item Each Hilbert space $\mathcal{H}_{B_{i}}$ is isomorphic to $\mathcal{H}%
_{B}$ for all $i\in\left\{  1,\ldots,k\right\}  $.

\item The state $\omega_{AB_{1}\cdots B_{k}}$ is invariant under permutations
of the systems $B_{1}$ through $B_{k}$. That is,%
\begin{equation}
\forall\pi\in S_{k}:\omega_{AB_{1}\cdots B_{k}}=\left(  I_{A}\otimes
W_{B_{1}\cdots B_{k}}^{\pi}\right)  \omega_{AB_{1}\cdots B_{k}}\left(
I_{A}\otimes W_{B_{1}\cdots B_{k}}^{\pi}\right)  ^{\dag},
\label{eq:perm-inv-cond}%
\end{equation}
where $S_{k}$ is the symmetric group on $k$ elements and $W_{B_{1}\cdots
B_{k}}^{\pi}$ is a unitary operation that implements the permutation $\pi$ of
the $B$ systems.

\item The state $\omega_{AB_{1}\cdots B_{k}}$ is an extension of $\rho_{AB}$:%
\[
\rho_{AB}=\text{Tr}_{B_{2}\cdots B_{k}}\left\{  \omega_{AB_{1}\cdots B_{k}%
}\right\}  .
\]

\end{enumerate}

Let $\mathcal{E}_{k}$ denote the set of $k$-extendible states. Every separable
state is $k$-extendible for all $k\geq2$. Since every separable state has a
decomposition of the form in (\ref{eq:separable-state-pure-decomp}), an
obvious choice for a $k$-extension is%
\begin{equation}
\sum_{x\in\mathcal{X}}p_{X}\left(  x\right)  \ \left\vert \psi_{x}%
\right\rangle \left\langle \psi_{x}\right\vert _{A}\otimes\left\vert \phi
_{x}\right\rangle \left\langle \phi_{x}\right\vert _{B_{1}}\otimes
\cdots\otimes\left\vert \phi_{x}\right\rangle \left\langle \phi_{x}\right\vert
_{B_{k}}.\label{eq:sep-k-extension}%
\end{equation}
On the other hand, if a state is not separable, there always exists some $k$
for which the state is not $k$-extendible, and furthermore, for every $l>k$,
the state is also not $l$-extendible \cite{DPS02,DPS04}. In this sense, the
set $\mathcal{E}_{k}$ forms an approximation to the set $\mathcal{S}$ of
separable states, and the approximation becomes exact in the limit as
$k\rightarrow\infty$.

The \textit{maximum }$k$\textit{-extendible fidelity} of a state $\rho_{AB}$
is defined as%
\[
\max_{\sigma_{AB}\in\mathcal{E}_{k}}F\left(  \rho_{AB},\sigma_{AB}\right)  .
\]
In this paper, we give the maximum $k$-extendible fidelity an operational
interpretation as an upper bound on the maximum probability with which a
prover can convince a verifier to accept in our \textsf{QIP(2)} protocol that
tests for $k$-extendibility. Clearly, the above quantity converges to the
\textit{maximum separable fidelity} (defined in \cite{W04}) in the limit as
$k\rightarrow\infty$:%
\[
\max_{\sigma_{AB}\in\mathcal{S}}F\left(  \rho_{AB},\sigma_{AB}\right)
=\lim_{k\rightarrow\infty}\max_{\sigma_{AB}\in\mathcal{E}_{k}}F\left(
\rho_{AB},\sigma_{AB}\right)  .
\]

Finally, we state a lemma which extends Theorem~3 of \cite{BCY11a}. This lemma
establishes a notion of approximate $k$-extendibility which is essential for
our work here. The proof of this lemma is a straightforward modification of
the proof of Theorem~3 of \cite{BCY11a} and we provide it in the appendix.

\begin{lemma}
\label{cor:contra-approx-k-ext}Let $\rho_{AB}$ be $\varepsilon$-away in
one-way LOCC distance from the set of separable states, for some
$\varepsilon>0$:%
\[
\min_{\sigma_{AB}\in\mathcal{S}}\left\Vert \rho_{AB}-\sigma_{AB}\right\Vert
_{\text{1-LOCC}}\geq\varepsilon.
\]
Then the state $\rho_{AB}$ is $\delta$-away in trace distance from the set of
$k$-extendible states:%
\[
\min_{\sigma_{AB}\in\mathcal{E}_{k}}\left\Vert \rho_{AB}-\sigma_{AB}%
\right\Vert _{1}\geq\delta,
\]
for $\delta < \varepsilon$ and where%
\[
k=\left\lceil \frac{16\ln2\ \log\left\vert A\right\vert }{(\varepsilon - \delta)
^{2}  }\right\rceil .
\]

\end{lemma}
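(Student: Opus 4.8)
The plan is to argue by contraposition, reducing everything to Theorem~3 of \cite{BCY11a} together with two elementary inequalities. Suppose, contrary to the conclusion, that $\rho_{AB}$ lies within trace distance strictly less than $\delta$ of the set $\mathcal{E}_{k}$; that is, there is a $k$-extendible state $\tau_{AB}\in\mathcal{E}_{k}$ with $\left\Vert \rho_{AB}-\tau_{AB}\right\Vert _{1}<\delta$. The goal is to contradict the hypothesis $\min_{\sigma_{AB}\in\mathcal{S}}\left\Vert \rho_{AB}-\sigma_{AB}\right\Vert _{\text{1-LOCC}}\geq\varepsilon$ by exhibiting a separable state within 1-LOCC distance strictly less than $\varepsilon$ of $\rho_{AB}$.

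First I would invoke Theorem~3 of \cite{BCY11a}, applied to the $k$-extendible state $\tau_{AB}$: since $\tau_{AB}\in\mathcal{E}_{k}$, there is a separable state $\sigma_{AB}\in\mathcal{S}$ with $\left\Vert \tau_{AB}-\sigma_{AB}\right\Vert _{\text{1-LOCC}}\leq\sqrt{16\ln2\,\log\left\vert A\right\vert /k}$. (This is the Brand\~{a}o--Christandl--Yard faithfulness bound, obtained by combining the monogamy/extendibility upper bound on the squashed entanglement of a $k$-extendible state with their Pinsker-type lower bound on squashed entanglement in terms of 1-LOCC distance to $\mathcal{S}$; the appendix proof just reproduces that chain of inequalities.) By the choice $k=\left\lceil 16\ln2\,\log\left\vert A\right\vert /(\varepsilon-\delta)^{2}\right\rceil$---a well-defined positive integer precisely because $\delta<\varepsilon$---the right-hand side is at most $\varepsilon-\delta$. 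I then finish with the triangle inequality for the 1-LOCC norm and the monotonicity $\left\Vert \ast\right\Vert _{\text{1-LOCC}}\leq\left\Vert \ast\right\Vert _{1}$ of (\ref{eq:difference-trace-LOCC}), applied to the first term:
\[
\left\Vert \rho_{AB}-\sigma_{AB}\right\Vert _{\text{1-LOCC}}\leq\left\Vert \rho_{AB}-\tau_{AB}\right\Vert _{1}+\left\Vert \tau_{AB}-\sigma_{AB}\right\Vert _{\text{1-LOCC}}<\delta+(\varepsilon-\delta)=\varepsilon,
\]
which contradicts the hypothesis. Hence no such $\tau_{AB}$ exists, and $\min_{\sigma_{AB}\in\mathcal{E}_{k}}\left\Vert \rho_{AB}-\sigma_{AB}\right\Vert _{1}\geq\delta$, as claimed.

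The bulk of the argument is thus borrowed wholesale from \cite{BCY11a}; the only genuinely new ingredient is the observation that a state which is \emph{close} (in trace distance) to a $k$-extendible state is close (in 1-LOCC distance) to a separable state, which is precisely the two-line triangle-inequality step above---this is what makes the lemma an ``approximate'' version of Theorem~3 of \cite{BCY11a}. Accordingly, I expect the only points requiring care to be bookkeeping: confirming that the constant $16\ln2\,\log\left\vert A\right\vert$ under the square root is exactly what Theorem~3 of \cite{BCY11a} delivers for the one-sided 1-LOCC norm as defined here (with measurements on $B$), and making sure the ceiling and the strict-versus-nonstrict inequalities are arranged so that the final chain is strict. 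There is no substantive analytic obstacle beyond what is already contained in \cite{BCY11a}.
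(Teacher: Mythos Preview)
Your proposal is correct and follows essentially the same route as the paper: both argue by contraposition, invoke the Brand\~{a}o--Christandl--Yard bound $\left\Vert \tau_{AB}-\mathcal{S}\right\Vert_{\text{1-LOCC}}\leq\sqrt{16\ln 2\,\log|A|/k}$ for the $k$-extendible state $\tau_{AB}$, and then finish with the triangle inequality together with $\left\Vert\ast\right\Vert_{\text{1-LOCC}}\leq\left\Vert\ast\right\Vert_{1}$. The only cosmetic difference is that the paper's appendix reproduces the squashed-entanglement monogamy chain from \cite{BCY11a} inline rather than citing Theorem~3 as a black box, and states the contrapositive as a standalone proposition before deducing the lemma.
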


\subsection{Quantum interactive proof systems}

Quantum interactive proof systems were first defined and analyzed in
\cite{W02,KW00}. Their study is an important component of quantum
computational complexity theory, and they are essential in our work here. A
quantum interactive proof system involves an exchange of quantum messages
between a polynomial-time quantum verifier and a computationally unbounded
quantum prover. Since any quantum interactive proof system can be parallelized
such that at most three messages are exchanged between the verifier and prover
(\textsf{QIP} $=$ \textsf{QIP(3)} up to a wide range of parameters)
\cite{KW00}, we focus our attention on quantum interactive proof systems with
three or fewer messages, starting by defining \textsf{QIP(3)}.

\subsubsection{Three-message quantum interactive proof systems}

Given an input string $x$ of length $n$, a quantum verifier consists of two
unitary quantum circuits $V_{1}$ and $V_{2}$\ computed from $x$ in polynomial
time. These circuits are generated from some finite gate set that is universal
for quantum computation \cite{NC00,W09}. The qubits of the verifier are
divided into two sets: private qubits and message qubits. The verifier
exchanges the message qubits with the prover, while keeping the private qubits
in his own laboratory. A quantum prover is defined similarly to the
verifier---he has private qubits and message qubits that he exchanges with the
verifier. However, he is allowed to perform two arbitrary, unconstrained
unitary quantum operations $P_{1}$ and $P_{2}$\ on the qubits in his
laboratory.%
\begin{figure}
[ptb]
\begin{center}
\includegraphics[
natheight=2.239900in,
natwidth=9.147100in,
width=6.5111in
]%
{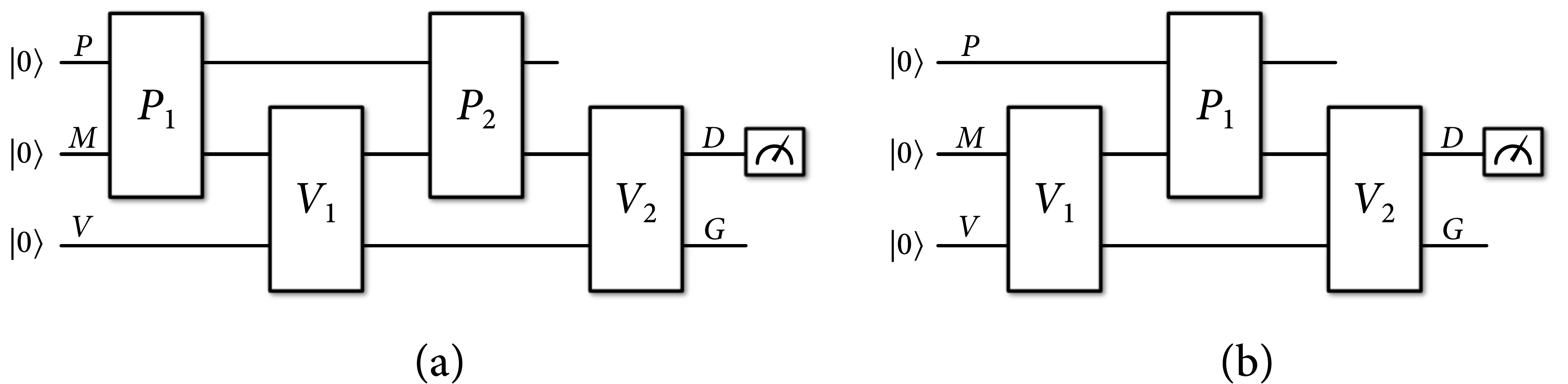}%
\caption{Quantum interactive proof systems with (a)\ three messages and
(b)\ two messages exchanged between the verifier and the prover.}%
\label{fig:qip-3-and-2}%
\end{center}
\end{figure}

Figure~\ref{fig:qip-3-and-2}(a) depicts a quantum interactive proof system
with three messages. $P$ is the register containing the prover's private
qubits, $M$ contains the message qubits, and $V$ contains the verifier's
private qubits. For simplicity, we take the convention that these registers
may change size throughout the execution of the protocol, as long as registers
$M$ and $V$ are polynomial in the length of the input $x$. The quantum
interactive proof system begins with the prover initializing the qubits in his
registers $P$ and $M$ to the all-zero state, and the verifier does the same to
his qubits in the register $V$. The prover then acts with the unitary $P_{1}$
on registers $P$ and $M$ and sends the message qubits to the verifier. They
proceed along the lines indicated in Figure~\ref{fig:qip-3-and-2}(a), until
the verifier acts with a final unitary $V_{2}$. This final unitary has two
output systems: a decision qubit in register$~D$ and other \textquotedblleft
garbage qubits\textquotedblright\ in register $G$. The verifier then measures
the decision qubit in the computational basis to decide whether to accept or
reject. The maximum probability with which the prover can make the verifier
accept is equal to%
\begin{equation}
\max_{\left\vert \psi\right\rangle _{PM},\ P_{2}}\left\Vert \left\langle
1\right\vert _{D}V_{2}P_{2}V_{1}\left\vert \psi\right\rangle _{PM}\left\vert
0\right\rangle _{V}\right\Vert _{2}^{2},\label{eq:max-accept-qip-3}%
\end{equation}
where the maximization is over all pure states $\left\vert \psi\right\rangle
_{PM}=P_{1}\left\vert 0\right\rangle _{PM}$ that the prover can prepare at the
beginning of the protocol and all unitaries $P_{2}$ that he can perform.

By defining $\mathcal{N}_{1}$ as the quantum channel induced by applying the
verifier's first unitary and tracing over the verifier's message register $M$:%
\[
\mathcal{N}_{1}\left(  \rho\right)  \equiv\text{Tr}_{M}\left\{  V_{1}\left(
\rho\otimes\left\vert 0\right\rangle \left\langle 0\right\vert _{V}\right)
V_{1}^{\dag}\right\}  ,
\]
defining $\mathcal{N}_{2}$ as the quantum channel induced by applying the
inverse of the verifier's final unitary and tracing over the verifier's
message register $M$:%
\begin{equation}
\mathcal{N}_{2}\left(  \sigma\right)  \equiv\text{Tr}_{M}\left\{  V_{2}^{\dag
}\left(  \sigma_{G}\otimes\left\vert 1\right\rangle \left\langle 1\right\vert
_{D}\right)  V_{2}\right\}  ,\label{eq:channel-from-final-unitary-verifier}%
\end{equation}
and applying Uhlmann's theorem in (\ref{eq:uhlmann-thm}) to the maximum
acceptance probability in (\ref{eq:max-accept-qip-3}), one can rewrite
(\ref{eq:max-accept-qip-3}) as the maximum output fidelity of the channels
$\mathcal{N}_{1}$ and $\mathcal{N}_{2}$ over all possible inputs to them
\cite{KW00,RW05,R09}:%
\begin{equation}
\max_{\left\vert \psi\right\rangle _{PM},\ P_{2}}\left\Vert \left\langle
1\right\vert _{D}V_{2}P_{2}V_{1}\left\vert \psi\right\rangle _{PM}\left\vert
0\right\rangle _{V}\right\Vert _{2}^{2}=\max_{\rho,\ \sigma}F\left(
\mathcal{N}_{1}\left(  \rho\right)  ,\mathcal{N}_{2}\left(  \sigma\right)
\right)  .\label{eq:max-output-fidelity}%
\end{equation}
(See Chapter~4 of \cite{R09} for details of this calculation).

We can now define the complexity class \textsf{QIP(3)}:

\begin{definition}
[\textsf{QIP(3)}]\label{def:QIP(3)}Let $A=(A_{\text{yes}},A_{\text{no}})$ be a
promise problem, and let $c,s:\mathbb{N}\rightarrow\left[  0,1\right]  $ be
polynomial-time computable functions such that the gap $c-s$ is at least an
inverse polynomial in the input size. Then $A\in\ $\textsf{QIP}$\left(
3,c,s\right)  $ if there exists a three-message quantum interactive proof
system with the following properties:

\begin{enumerate}
\item Completeness:\ For all input strings $x\in A_{\text{yes}}$, there exists
a prover that causes the verifier to accept with probability at least
$c\left(  \left\vert x\right\vert \right)  $.

\item Soundness:\ For all input strings $x\in A_{\text{no}}$, every prover
causes the verifier to accept with probability at most $s\left(  \left\vert
x\right\vert \right)  $.
\end{enumerate}
\end{definition}

Note that one can amplify the gap between $c$ and $s$ to a
constant by reducing the proof system to one with perfect completeness 
followed by employing parallel repetition \cite{KW00}.

Given the fact that we can rewrite the maximum acceptance probability of any
\textsf{QIP(3)} protocol as in (\ref{eq:max-output-fidelity}), it is
straightforward to formulate a complete promise problem for \textsf{QIP(3)}
called \textsf{CLOSE-IMAGES}, which essentially just amounts to rewriting the
above definition \cite{RW05,R09}:

\begin{problem}
[\textsf{CLOSE-IMAGES}]Fix two constants $c,s\in\left[  0,1\right]  $ such
that $c>s$. Given are two mixed-state quantum circuits $Q_{0}$ and $Q_{1}$,
each accepting $n$-qubit inputs and having $m$-qubit outputs. Decide whether

\begin{enumerate}
\item Yes:\ There exist $n$-qubit states $\rho_{0}$ and $\rho_{1}$ such that
$\max_{\rho_{0},\rho_{1}}F\left(  Q_{0}\left(  \rho_{0}\right)  ,Q_{1}\left(
\rho_{1}\right)  \right)  \geq c$.

\item No:\ For all $n$-qubit states $\rho_{0}$ and $\rho_{1}$, it holds that
$\max_{\rho_{0},\rho_{1}}F\left(  Q_{0}\left(  \rho_{0}\right)  ,Q_{1}\left(
\rho_{1}\right)  \right)  \leq s$.
\end{enumerate}
\end{problem}

The following promise problem is also complete for \textsf{QIP(3)}, but
proving so requires more than a trivial rewriting of the definition of
\textsf{QIP(3)} \cite{RW05,R09}:

\begin{problem}
[\textsf{QUANTUM-CIRCUIT-DISTINGUISHABILITY}]\label{prob:QCD}Fix a constant
$\varepsilon\in\lbrack0,1)$. Given are two mixed-state quantum circuits
$Q_{0}$ and $Q_{1}$, each with $n$-qubit inputs and $m$-qubit outputs. Decide whether

\begin{enumerate}
\item Yes:\ There is a quantum input for which the circuits are
distinguishable:%
\[
\max_{\rho\in\mathcal{D}(\mathcal{H}_{R}\otimes\mathcal{H}_{\text{in}}%
)}\left\Vert \left(  I_{R}\otimes Q_{0}\right)  \left(  \rho\right)  -\left(
I_{R}\otimes Q_{1}\right)  \left(  \rho\right)  \right\Vert _{1}%
\geq2-\varepsilon.
\]

\item No: No quantum input can distinguish the circuits:%
\[
\max_{\rho\in\mathcal{D}(\mathcal{H}_{R}\otimes\mathcal{H}_{\text{in}}%
)}\left\Vert \left(  I_{R}\otimes Q_{0}\right)  \left(  \rho\right)  -\left(
I_{R}\otimes Q_{1}\right)  \left(  \rho\right)  \right\Vert _{1}%
\leq\varepsilon.
\]

\end{enumerate}
\end{problem}

In what follows, we abbreviate \textsf{QUANTUM-CIRCUIT-DISTINGUISHABILITY}\ as
\textsf{QCD}.

\subsubsection{Two-message quantum interactive proof systems}

Now that we have defined three-message quantum interactive proof systems, it
is straightforward to define two-message quantum interactive proof systems.
The description is essentially identical to that given in the first two
paragraphs of the previous section, with the exception that the verifier acts
both first and last with unitaries $V_{1}$ and $V_{2}$, while the prover acts
between these two operations with a unitary $P_{1}$ (see
Figure~\ref{fig:qip-3-and-2}(a)). Thus, the maximum probability with which the
prover can make the verifier accept is equal to%
\[
\max_{P_{2}}\left\Vert \left\langle 1\right\vert _{D}V_{2}P_{2}\left\vert
0\right\rangle _{P}\left\vert \phi\right\rangle _{MV}\right\Vert _{2}^{2},
\]
where $\left\vert \phi\right\rangle _{MV}=V_{1}\left\vert 0\right\rangle
_{MV}$. By following essentially the same reasoning as before, we can rewrite
this maximum acceptance probability in terms of the quantum fidelity. First,
we define the mixed state%
\[
\omega_{V}=\text{Tr}_{M}\left\{  V_{1}\ \left\vert 0\right\rangle \left\langle
0\right\vert _{MV}\ V_{1}^{\dag}\right\}  .
\]
We also define the quantum channel $\mathcal{N}$ from $V_{2}$ as in
(\ref{eq:channel-from-final-unitary-verifier}). By again applying Uhlmann's
theorem, it is straightforward to prove that the maximum acceptance
probability is equal to the fidelity between the state $\omega$ and the
channel $\mathcal{N}$ when maximizing over all inputs to the channel:%
\begin{equation}
\max_{P_{2}}\left\Vert \left\langle 1\right\vert _{D}V_{2}P_{2}\left\vert
0\right\rangle _{P}\left\vert \phi\right\rangle _{MV}\right\Vert _{2}^{2}%
=\max_{\sigma}F\left(  \omega,\mathcal{N}\left(  \sigma\right)  \right)  .
\label{eq:fidelity-QIP(2)}%
\end{equation}
The definition of the complexity class \textsf{QIP(2)} is identical to that
given in Definition~\ref{def:QIP(3)} (but substituting \textsf{QIP(2)} for
\textsf{QIP(3)}). One can amplify the gap between $c$ and $s$ to a constant
by taking an AND of majorities as done in \cite{JUW09}.
Also, the following promise problem, called
\textsf{CLOSE-IMAGE}, is a complete promise problem for \textsf{QIP(2)}:

\begin{problem}
[\textsf{CLOSE-IMAGE}]Fix two constants $c,s\in\left[  0,1\right]  $ such that
$c>s$. Given is a mixed-state quantum circuit to generate the $m$-qubit state
$\rho_{0}$ and a mixed-state quantum circuit $Q_{1}$, with an $n$-qubit input
state and an $m$-qubit output state. Decide whether

\begin{enumerate}
\item Yes:\ There exists an $n$-qubit state $\rho_{1}$ such that $\max
_{\rho_{1}}F\left(  \rho_{0},Q_{1}\left(  \rho_{1}\right)  \right)  \geq c$.

\item No:\ For all $n$-qubit states $\rho_{1}$, it holds that $\max_{\rho_{1}%
}F\left(  \rho_{0},Q_{1}\left(  \rho_{1}\right)  \right)  \leq s$.
\end{enumerate}
\end{problem}

The fact that \textsf{CLOSE-IMAGE}\ is complete for \textsf{QIP(2)} follows
essentially the same reasoning as in \cite{R09} (it amounts to a rewriting of
the definition of \textsf{QIP(2)}).

\subsubsection{Quantum statistical zero-knowledge proof systems}

Another kind of quantum interactive proof system that is relevant for our work
here is a quantum statistical zero-knowledge (\textsf{QSZK}) proof system
\cite{W02,W09zkqa}. The definition of the \textquotedblleft
honest-verifier\textquotedblright\ version of this complexity class is
essentially the same as that for \textsf{QIP} (with an arbitrary number of
messages exchanged), but the difference is that in the case of a positive
problem instance, the states of the verifier before and after every
interaction with the prover should be such that he could have actually
generated them himself. In this sense, he does not gain any \textquotedblleft
knowledge\textquotedblright\ by interacting with the prover (other than being
convinced to accept). Watrous has shown that any honest-verifier QSZK\ proof
system has an equivalent proof system in which the verifier is not required to
behave honestly \cite{W09zkqa}.

The following promise problem is complete for \textsf{QSZK} \cite{W02}:

\begin{problem}
[\textsf{QUANTUM-STATE-DISTINGUISHABILITY}]Fix a constant $\varepsilon
\in\lbrack0,1)$. Given is a mixed-state quantum circuit to generate the
$n$-qubit states $\rho_{0}$ and $\rho_{1}$. Decide whether

\begin{enumerate}
\item Yes:\ $\left\Vert \rho_{0}-\rho_{1}\right\Vert _{1}\geq2-\varepsilon$.

\item No:\ $\left\Vert \rho_{0}-\rho_{1}\right\Vert _{1}\leq\varepsilon$.
\end{enumerate}
\end{problem}

In what follows, we abbreviate \textsf{QUANTUM-STATE-DISTINGUISHABILITY}\ as
\textsf{QSD}.

\section{A two-message quantum interactive proof system decides the quantum
separability problem}

\label{sec:qsep-in-qip2}We are now ready to provide a formal statement of the
promise problem \textsf{QSEP-STATE}$_{\operatorname{1,1-LOCC}}$, and we follow with a proof that it is
decidable by a two-message quantum interactive proof system.

\begin{problem}
[\textsf{QSEP-STATE}$_{\operatorname{1,1-LOCC}}$$\left(  \delta_{c},\delta_{s}\right)  $]%
\label{prob:QSEP-CIRCUIT}Given is a mixed-state quantum circuit to generate
the $n$-qubit state $\rho_{AB}$, along with a labeling of the qubits in the
reference system $R$ and the output qubits for $A$ and $B$. Decide whether

\begin{enumerate}
\item Yes:\ There is a separable state $\sigma_{AB}\in\mathcal{S}$\ that is $\delta_{c}$-close to $\rho_{AB}$ in trace distance:%
\[
\min_{\sigma_{AB}\in\mathcal{S}}\left\Vert \rho_{AB}-\sigma_{AB}\right\Vert
_{1}\leq\delta_{c}.
\]

\item No:\ Every separable state is at least $\delta_{s}%
$-far from $\rho_{AB}$  in 1-LOCC$\ $distance:%
\[
\min_{\sigma_{AB}\in\mathcal{S}}\left\Vert \rho_{AB}-\sigma_{AB}\right\Vert
_{1-\text{LOCC}} \geq \delta_{s}.
\]

\end{enumerate}
\end{problem}

\begin{theorem}
\label{thm:QSEP-in-QIP(2)}\textsf{QSEP-STATE}$_{\operatorname{1,1-LOCC}}$$\left(  \delta_{c},\delta
_{s}\right)  $\textsf{\ }$\in\ $\textsf{QIP(2)} if there are polynomial-time
computable functions $\delta_{c},\delta_{s}:\mathbb{N}\rightarrow\left[
0,1\right]  $, such that the difference $\delta_{s}^2 / 8  -2\sqrt{\delta_{c}}$
is larger than an inverse polynomial in the circuit size.\end{theorem}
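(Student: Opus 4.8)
The strategy is to give an explicit two-message protocol built on $k$-extendibility and the permutation test, and then to establish completeness and soundness using Uhlmann's theorem (\ref{eq:uhlmann-thm}), the Fuchs--van de Graaf inequalities (\ref{eq:FvG-ineqs}), and Lemma~\ref{cor:contra-approx-k-ext}. On input a circuit generating $\rho_{AB}$, the verifier runs the unitary dilation of that circuit on the all-zero input \emph{without} tracing out the reference register, obtaining a pure state $\ket{\psi}_{RAB}$ that purifies $\rho_{AB}$; it keeps $A$ and $B$ (henceforth relabeled $B_1$) and sends the register $R$ to the prover. It also fixes $k:=\lceil 16\ln 2\,\log|A|\,/\,(\delta_s(1-1/\sqrt 2))^2\rceil$; this is polynomial in the circuit size, since $\delta_s^2/8-2\sqrt{\delta_c}$ being an inverse polynomial forces $\delta_s\geq 1/\mathrm{poly}$, while $\log|A|$ is at most the circuit size. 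The (computationally unbounded) prover applies an arbitrary unitary to $R$ together with a private ancilla, producing registers $B_2,\dots,B_k$ (each isomorphic to $B_1$) and a leftover register $R'$ that he keeps; he returns $B_2\cdots B_k$. Finally, the verifier runs the permutation test over $S_k$ \cite{K95,BBDEJM97,KNY08} on $B_1\cdots B_k$ --- equivalently, it measures the projector $I_A\otimes\Pi_{\mathrm{sym}}$ onto $\mathcal{H}_A\otimes\mathrm{Sym}^k(\mathcal{H}_B)$ --- and accepts if and only if the test succeeds. This is a polynomial-time two-message proof system of the form required by Definition~\ref{def:QIP(3)} (with \textsf{QIP(2)} substituted).

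\emph{Completeness.} In a yes-instance there is a separable $\sigma_{AB}\in\mathcal{S}$ with $\|\rho_{AB}-\sigma_{AB}\|_1\leq\delta_c$. Its canonical $k$-extension $\omega_{AB_1\cdots B_k}$ of the form (\ref{eq:sep-k-extension}) is supported on $\mathcal{H}_A\otimes\mathrm{Sym}^k(\mathcal{H}_B)$, so $\mathrm{Tr}\{(I_A\otimes\Pi_{\mathrm{sym}})\,\omega_{AB_1\cdots B_k}\}=1$. Fix a purification $\ket{\phi^\omega}$ of $\omega_{AB_1\cdots B_k}$ with purifying register $R'B_2\cdots B_k$. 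Since (\ref{eq:FvG-ineqs}) gives $F(\rho_{AB},\sigma_{AB})\geq 1-\delta_c$, Uhlmann's theorem (\ref{eq:uhlmann-thm}) provides a unitary the prover can apply to $R$ and its ancilla so that the resulting global pure state $\ket{\chi}$ satisfies $|\langle\chi|\phi^\omega\rangle|^2=F(\rho_{AB},\sigma_{AB})$, whence $\||\chi\rangle\!\langle\chi|-|\phi^\omega\rangle\!\langle\phi^\omega|\|_1\leq 2\sqrt{\delta_c}$ by (\ref{eq:FvG-ineqs}). As the prover never acts on $A$ or $B_1$, tracing out $R'$ leaves the verifier with a state on $AB_1\cdots B_k$ within trace distance $2\sqrt{\delta_c}$ of $\omega_{AB_1\cdots B_k}$, so by (\ref{eq:trace-inequality}) the test succeeds with probability at least $1-2\sqrt{\delta_c}$. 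Thus we may take $c=1-2\sqrt{\delta_c}$.

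\emph{Soundness.} The key structural fact is that the prover acts only on $R$ and his private ancilla, so regardless of his strategy the reduced state the verifier holds on $AB_1$ just before the test is exactly $\rho_{AB}$. Let $\xi$ be the verifier's state on $AB_1\cdots B_k$ at that point and suppose the test succeeds with probability $p=\mathrm{Tr}\{(I_A\otimes\Pi_{\mathrm{sym}})\,\xi\}$. Put $\xi':=(I_A\otimes\Pi_{\mathrm{sym}})\,\xi\,(I_A\otimes\Pi_{\mathrm{sym}})/p$. Projecting a purification of $\xi$ with $I_A\otimes\Pi_{\mathrm{sym}}$ and renormalizing yields a purification of $\xi'$ whose squared overlap with the original is $p$, so $F(\xi,\xi')\geq p$; and $\xi'$ is supported on $\mathcal{H}_A\otimes\mathrm{Sym}^k(\mathcal{H}_B)$, hence invariant under permutations of $B_1,\dots,B_k$, so $\xi'_{AB_1}\in\mathcal{E}_k$. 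Monotonicity of fidelity under partial trace together with $\xi_{AB_1}=\rho_{AB}$ gives $F(\rho_{AB},\xi'_{AB_1})\geq p$, hence $\max_{\sigma_{AB}\in\mathcal{E}_k}F(\rho_{AB},\sigma_{AB})\geq p$. In a no-instance $\rho_{AB}$ is $\delta_s$-far in one-way LOCC distance from $\mathcal{S}$, so Lemma~\ref{cor:contra-approx-k-ext} applied with $\delta=\delta_s/\sqrt 2$ (the value for which our $k$ is the stated ceiling) gives $\|\rho_{AB}-\sigma_{AB}\|_1\geq\delta_s/\sqrt 2$ for every $\sigma_{AB}\in\mathcal{E}_k$, and then (\ref{eq:FvG-ineqs}) gives $F(\rho_{AB},\sigma_{AB})\leq 1-\delta_s^2/8$ for all such $\sigma_{AB}$. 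Combining, $p\leq 1-\delta_s^2/8$, so we may take $s=1-\delta_s^2/8$.

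The gap is then $c-s=\delta_s^2/8-2\sqrt{\delta_c}$, which by hypothesis is at least an inverse polynomial in the circuit size, and $c,s$ are polynomial-time computable; since the verifier is efficient, $\textsf{QSEP-STATE}_{\operatorname{1,1-LOCC}}(\delta_c,\delta_s)\in\textsf{QIP(2)}$. I expect the main obstacle to be the soundness argument, and specifically the interaction there between the permutation test --- which only constrains the $B_1\cdots B_k$ part of the verifier's state --- and the fact that the $AB_1$ marginal is pinned to $\rho_{AB}$ by the protocol; together these must force $\rho_{AB}$ to be close to a genuinely $k$-extendible state. It is exactly because the no-instance promise is stated in one-way LOCC distance that Lemma~\ref{cor:contra-approx-k-ext} can accomplish this with $k$ polynomial in $\log|A|$ rather than exponential; a trace-distance promise would make the verifier inefficient, which is the reason for the asymmetric choice of norms in Problem~\ref{prob:QSEP-CIRCUIT}. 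A minor but essential bit of bookkeeping is to pick the free parameter $\delta$ in the lemma (here $\delta=\delta_s/\sqrt 2$) so that the soundness error is precisely $1-\delta_s^2/8$, yielding the claimed gap.
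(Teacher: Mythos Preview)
Your proof is correct and uses the same protocol, the same choice of $k$, and the same completeness argument as the paper. The soundness analysis, however, is organized differently: the paper appeals to the general \textsf{QIP(2)} acceptance-probability formula (\ref{eq:fidelity-QIP(2)}), identifies the verifier's first-round reduced state as $\rho_{AB}\otimes|\text{perm}\rangle\langle\text{perm}|_C$ and the inverse of the second unitary as a channel $\mathcal{M}$, and then discards the control register to obtain $p\leq\max_{\sigma}F(\rho_{AB},\mathcal{M}_{ABB_2\cdots B_k\to AB}(\sigma))=\max_{\sigma_{AB}\in\mathcal{E}_k}F(\rho_{AB},\sigma_{AB})$. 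You instead argue operationally: since the prover never touches $A$ or $B_1$, the $AB_1$ marginal of the verifier's state is exactly $\rho_{AB}$; post-selecting on the symmetric-subspace outcome yields a $k$-extendible state $\xi'$ with $F(\xi,\xi')\geq p$; and monotonicity of fidelity under partial trace closes the loop. Both routes land on the identical upper bound and then finish with Lemma~\ref{cor:contra-approx-k-ext} and (\ref{eq:FvG-ineqs}) in the same way. Your version is a bit more self-contained, avoiding the \textsf{CLOSE-IMAGE} rewriting of \textsf{QIP(2)}, whereas the paper's version makes the link to that framework explicit and interprets the soundness channel as ``randomly permute the $B$ systems and discard all but one.''
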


\begin{proof}
Figure~\ref{fig:qsep-qip2}\ depicts a two-message quantum interactive proof
system for \textsf{QSEP-STATE}$_{\operatorname{1,1-LOCC}}$. The protocol begins with the verifier
preparing the state $\left\vert \psi_{\rho}\right\rangle _{RAB}$, a particular
purification of $\rho_{AB}$, by running the quantum circuit $U_{\rho}%
$\ specified by the input string (the problem instance). The verifier
transmits the reference system to the prover, who then acts on $R$ and some
ancillary qubits with a unitary $P_{1}$\ that has output systems $R^{\prime}$,
$B_{2}$, \ldots, $B_{k}$. The prover transmits systems $B_{2}$, \ldots,
$B_{k}$ to the verifier. The verifier then performs phase estimation over the
symmetric group \cite{K95,BBDEJM97} (also known as the \textquotedblleft
permutation test\textquotedblright\ \cite{KNY08}) on the registers $B$,
$B_{2}$, \ldots, $B_{k}$, using the qubits in system $C$ as the control. The
verifier performs a computational basis measurement on all of the qubits in
the control register $C$ and accepts if and only if the measurement outcomes
are all zeros.%
\begin{figure}
[ptb]
\begin{center}
\includegraphics[
natheight=2.040100in,
natwidth=4.427000in,
width=4.478in
]%
{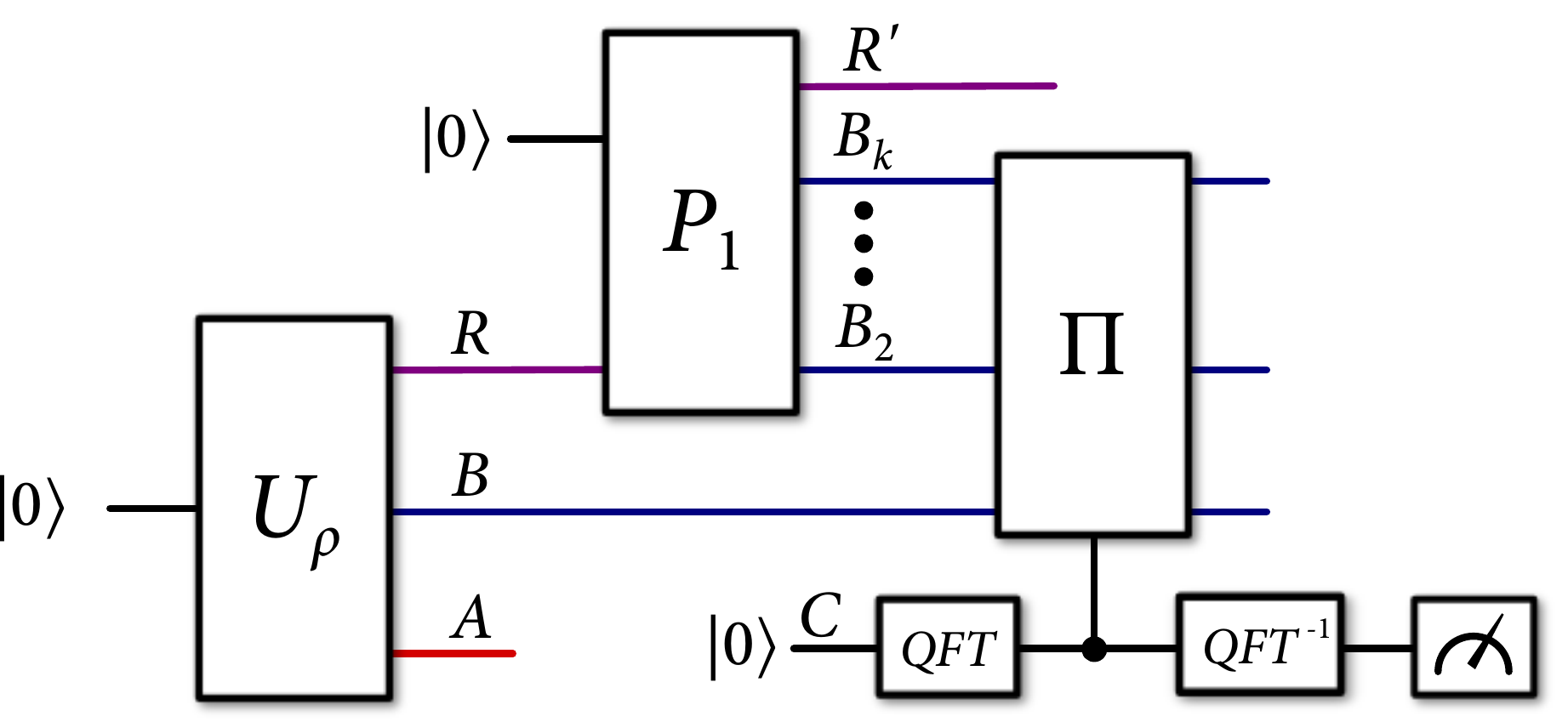}%
\caption{A two-message quantum interactive proof system for
\textsf{QSEP-STATE}$_{\operatorname{1,1-LOCC}}$. It begins with the verifier executing the circuit
$U_{\rho}$ that generates the state $\rho_{AB}$. He sends the reference system
to the prover. In the case that $\rho_{AB}$ is separable, the prover should be
able to act with a unitary on the reference system and some ancillas in order
to generate a purification of a $k$-extension of $\rho_{AB}$. The prover sends
all of the extension systems back to the verifier, who then performs phase
estimation over the symmetric group (a quantum Fourier transform followed by a
controlled permutation) in order to test if the state sent by the prover is
really a $k$-extension.}%
\label{fig:qsep-qip2}%
\end{center}
\end{figure}

This protocol is just an implementation of a $k$-extendibility test on a
quantum computer. We can build intuition for why it works on YES instances
by examining the exact case, when $\rho_{AB}$ is actually a separable state.
In this case, we know that $\rho_{AB}$ has a decomposition of the form given
in (\ref{eq:separable-state-pure-decomp}), and as such, it has an extension of
the form in (\ref{eq:sep-k-extension}) for all $k>1$. Thus, the following state is
a purification of $\rho_{AB}$:%
\[
\left\vert \phi_{k,\rho}\right\rangle _{R^{\prime}ABB_{2}\cdots B_{k}}%
\equiv\sum_{x\in\mathcal{X}}\sqrt{p_{X}\left(  x\right)  }\ \left\vert
x\right\rangle _{R^{\prime}}\otimes\left\vert \psi_{x}\right\rangle
_{A}\otimes\left\vert \phi_{x}\right\rangle _{B}\otimes\left\vert \phi
_{x}\right\rangle _{B_{2}}\otimes\cdots\otimes\left\vert \phi_{x}\right\rangle
_{B_{k}},
\]
where $\{\left\vert x\right\rangle _{R^{\prime}}\}$ is some orthonormal basis
for the reference system. Since all purifications are related by unitaries on
the reference system, the prover can append ancilla qubits to the $R$ system
received from the verifier and perform a unitary $P_{1}$ that takes
$\left\vert \psi_{\rho}\right\rangle _{RAB}\left\vert 0\right\rangle $ to
$\left\vert \phi_{k,\rho}\right\rangle _{R^{\prime}ABB_{2}\cdots B_{k}}$. The
prover then sends the systems $B_{2}$, \ldots, $B_{k}$ to the verifier. The
verifier performs a permutation test on the systems $B$, $B_{2}$, \ldots,
$B_{k}$. Since the state $\left\vert \phi_{k,\rho}\right\rangle _{R^{\prime
}ABB_{2}\cdots B_{k}}$ is invariant under permutations of the systems $B$,
$B_{2}$, \ldots, $B_{k}$, the qubits in the control register $C$\ do not
acquire a phase. Thus, after the final quantum Fourier transform is applied, the qubits
in the control register $C$ are in the all-zero state with certainty.

The analysis for a YES instance follows the above intuition closely. In this
case, there is some state $\sigma_{AB}\in\mathcal{S}$ that is $\delta_{c}%
$-close in trace distance to $\rho_{AB}$. By Uhlmann's theorem in
(\ref{eq:uhlmann-thm}) and the Fuchs-van-de-Graaf inequalities in
(\ref{eq:FvG-ineqs}), there is a purification $\left\vert \psi_{\sigma
}\right\rangle _{RAB}$\ of $\sigma_{AB}$ such that%
\begin{equation}
\left\Vert \left\vert \psi_{\rho}\right\rangle \left\langle \psi_{\rho
}\right\vert _{RAB}-\left\vert \psi_{\sigma}\right\rangle \left\langle
\psi_{\sigma}\right\vert _{RAB}\right\Vert _{1}\leq2\sqrt{\delta_{c}%
}.\label{eq:distance-pure-to-sep}%
\end{equation}
So the prover can just operate as above, but choosing his unitary $P_{1}$\ to
correspond to the state $\left\vert \psi_{\sigma}\right\rangle _{RAB}$
instead. Writing as $U$ the unitary corresponding to $P_{1}$ followed by the
permutation test, we obtain the following lower bound on the probability with
which the verifier accepts:%
\begin{align}
&  \text{Tr}\left\{  \left\vert 0\right\rangle \left\langle 0\right\vert
_{C}U\left(  \left\vert \psi_{\rho}\right\rangle \left\langle \psi_{\rho
}\right\vert _{RAB}\right)  U^{\dag}\right\}  \nonumber\\
&  =\text{Tr}\left\{  U^{\dag}\left\vert 0\right\rangle \left\langle
0\right\vert _{C}U\left(  \left\vert \psi_{\rho}\right\rangle \left\langle
\psi_{\rho}\right\vert _{RAB}\right)  \right\}  \nonumber\\
&  \geq\text{Tr}\left\{  U^{\dag}\left\vert 0\right\rangle \left\langle
0\right\vert _{C}U\left(  \left\vert \psi_{\sigma}\right\rangle \left\langle
\psi_{\sigma}\right\vert _{RAB}\right)  \right\}  -\left\Vert \left\vert
\psi_{\rho}\right\rangle \left\langle \psi_{\rho}\right\vert _{RAB}-\left\vert
\psi_{\sigma}\right\rangle \left\langle \psi_{\sigma}\right\vert
_{RAB}\right\Vert _{1} \nonumber\\
&  \geq1-2\sqrt{\delta_{c}}, \label{eq:QSEP-YES-analysis}
\end{align}
where the first inequality follows from (\ref{eq:trace-inequality}), and the
second inequality follows by applying (\ref{eq:distance-pure-to-sep}) and
because the protocol accepts with probability one for a separable state.

The analysis for a NO instance has two components:

\begin{enumerate}
\item demonstrating that the maximum $k$-extendible fidelity is an upper bound
on the maximum acceptance probability

\item using Lemma~\ref{cor:contra-approx-k-ext}\ regarding approximate
$k$-extendibility and the first item above to specify how large $k$ should be
in order to obtain a good upper bound on the maximum acceptance
probability.\footnote{For a YES instance, the value of $k$ does not matter
because the lower bound on the maximum acceptance probability is always as
given above.}
\end{enumerate}

We now discuss the first item above. Recall from (\ref{eq:fidelity-QIP(2)})
that the maximum acceptance probability of any QIP(2) system is equal to the
maximum fidelity between the state generated by the verifier's first circuit
and the channel generated by the inverse of the verifier's second circuit. For
the protocol in Figure~\ref{fig:qsep-qip2}, the state generated by the
verifier's first circuit is as follows:%
\[
\rho_{AB}\otimes\left\vert \text{perm}\right\rangle \left\langle
\text{perm}\right\vert _{C},
\]
where $\left\vert \text{perm}\right\rangle _{C}$ is a superposition over all
possible permutations of $k$ elements resulting from an application of the
quantum Fourier transform \cite{NC00}\ to the state $\left\vert 0\right\rangle
_{C}$:%
\begin{equation}
\left\vert \text{perm}\right\rangle _{C}\equiv\frac{1}{\sqrt{k!}}\sum_{\pi\in
S_{k}}\left\vert \pi\right\rangle _{C}, \label{eq:perm-state}
\end{equation}
so that the $C$ register requires $\left\lceil \log_{2}\left(  k!\right)
\right\rceil $ qubits. (Note that Figure~\ref{fig:qsep-qip2} depicts the
verifier generating $\left\vert \text{perm}\right\rangle _{C}$ later in the
protocol, but we could just as easily reorder things so that he generates this
state in the first step.) The channel generated by the inverse of the
verifier's circuit conditional on accepting is%
\[
\mathcal{M}_{ABB_{2}\cdots B_{k}\rightarrow ABC}\left(  \sigma_{ABB_{2}\cdots
B_{k}}\right)  \equiv\text{Tr}_{B_{2}\cdots B_{k}}\left\{  \left(  U_{\Pi
}\right)  _{BB_{2}\cdots B_{k}C}\left(  \sigma_{ABB_{2}\cdots B_{k}}%
\otimes\left\vert \text{perm}\right\rangle \left\langle \text{perm}\right\vert
_{C}\right)  \left(  U_{\Pi}^{\dag}\right)  _{BB_{2}\cdots B_{k}C}\right\}  ,
\]
where $\left(  U_{\Pi}\right)  _{BB_{2}\cdots B_{k}C}$ is a
controlled-permutation operation:%
\begin{equation}
\left(  U_{\Pi}\right)  _{BB_{2}\cdots B_{k}C}\equiv\sum_{\pi\in S_{k}%
}W_{BB_{2}\cdots B_{k}}^{\pi}\otimes\left\vert \pi\right\rangle \left\langle
\pi\right\vert _{C}, \label{eq:controlled-perm}
\end{equation}
and $W_{BB_{2}\cdots B_{k}}^{\pi}$ is a unitary operation corresponding to
permutation $\pi$ (mentioned before in (\ref{eq:perm-inv-cond})). So the
maximum acceptance probability is equal to%
\[
\max_{\sigma_{ABB_{2}\cdots B_{k}}}F\left(  \rho_{AB}\otimes\left\vert
\text{perm}\right\rangle \left\langle \text{perm}\right\vert _{C}%
,\mathcal{M}_{ABB_{2}\cdots B_{k}\rightarrow ABC}\left(  \sigma_{ABB_{2}\cdots
B_{k}}\right)  \right)  .
\]
Since the fidelity can only increase under the discarding of the control
register $C$,\footnote{We can interpret discarding the control register as
actually giving it to the prover, so that the resulting fidelity corresponds
to the maximum acceptance probability in a modified protocol in which the
prover controls the inputs to $C$.} the maximum acceptance probability is
upper bounded by the following quantity:%
\begin{equation}
\max_{\sigma_{ABB_{2}\cdots B_{k}}}F\left(  \rho_{AB},\mathcal{M}%
_{ABB_{2}\cdots B_{k}\rightarrow AB}\left(  \sigma_{ABB_{2}\cdots B_{k}%
}\right)  \right)  ,\label{eq:max-k-ext-fidelity-1}%
\end{equation}
where%
\begin{align*}
\mathcal{M}_{ABB_{2}\cdots B_{k}\rightarrow AB}\left(  \sigma_{ABB_{2}\cdots
B_{k}}\right)   &  =\text{Tr}_{C}\left\{  \mathcal{M}_{ABB_{2}\cdots
B_{k}\rightarrow ABC}\left(  \sigma_{ABB_{2}\cdots B_{k}}\right)  \right\}  \\
&  =\frac{1}{k!}\sum_{\pi\in S_{k}}\text{Tr}_{B_{2}\cdots B_{k}}\left\{
\left(  I_{A}\otimes W_{BB_{2}\cdots B_{k}}^{\pi}\right)  \sigma
_{ABB_{2}\cdots B_{k}}\left(  I_{A}\otimes W_{BB_{2}\cdots B_{k}}^{\pi
}\right)  ^{\dag}\right\}  ,
\end{align*}
which is just the channel that applies a random permutation of the $B$ systems
and discards the last $k-1$ systems $B_{2}$, \ldots, $B_{k}$. Clearly, since
the channel $\mathcal{M}_{ABB_{2}\cdots B_{k}\rightarrow AB}$ symmetrizes the
state of the systems $BB_{2}\cdots B_{k}$, the maximum in
(\ref{eq:max-k-ext-fidelity-1}) is achieved by a state $\sigma_{ABB_{2}\cdots
B_{k}}$ for which systems $BB_{2}\cdots B_{k}$ are permutation symmetric.
Thus, by recalling the definition of $k$-extendibility, we can rewrite
(\ref{eq:max-k-ext-fidelity-1}) as the maximum $k$-extendible fidelity of
$\rho_{AB}$:%
\begin{equation}
\max_{\sigma_{ABB_{2}\cdots B_{k}}}F\left(  \rho_{AB},\mathcal{M}%
_{ABB_{2}\cdots B_{k}\rightarrow AB}\left(  \sigma_{ABB_{2}\cdots B_{k}%
}\right)  \right)  =\max_{\sigma_{AB}\in\mathcal{E}_{k}}F\left(  \rho
_{AB},\sigma_{AB}\right)  .\label{eq:max-exists-arg}%
\end{equation}
This demonstrates that the maximum $k$-extendible fidelity is an upper bound
on the maximum acceptance probability and completes our proof of the first
item above.

The second part of the analysis of a NO instance involves determining how large $k$ needs to be. Suppose that%
\[
\min_{\sigma_{AB}\in\mathcal{S}}\left\Vert \rho_{AB}-\sigma_{AB}\right\Vert
_{\text{1-LOCC}}\geq\delta_{s}.
\]
According to Lemma~\ref{cor:contra-approx-k-ext}, if we take $k$ to be larger
than%
\[\left\lceil \frac{16\ln2\ \log\left\vert A\right\vert }{(\delta_s - \delta_s^{\prime})
^{2}  }\right\rceil .
\]
then we can guarantee that%
\[
\min_{\sigma_{AB}\in\mathcal{E}_{k}}\left\Vert \rho_{AB}-\sigma_{AB}%
\right\Vert _{1}\geq\delta_{s}^{\prime}.
\]
for $\delta_{s}^{\prime}$ strictly less than $\delta_s$.
We can enforce this latter condition by setting
$
\delta_{s}^{\prime}= \delta_{s} / \sqrt{2} .
$
Observe that $k$ is polynomial in
$n_{A}$, where $n_A$ is the number of qubits in Alice's system.
Then using the following manipulation of the Fuchs-van-de-Graaf
inequalities in (\ref{eq:FvG-ineqs}):
\[
F\left(  \rho,\sigma\right)  \leq1-\frac{1}{4}\left\Vert \rho-\sigma
\right\Vert _{1}^{2},
\]
we have that
\begin{align}
\max_{\sigma_{AB}\in\mathcal{E}_{k}}F\left(  \rho_{AB},\sigma_{AB}\right)   &
\leq1-\frac{1}{4}\min_{\sigma_{AB}\in\mathcal{E}_{k}}\left\Vert \rho
_{AB}-\sigma_{AB}\right\Vert _{1}^{2} \label{eq:bipartite-NO-analysis-1} \\
&  \leq1-\frac{1}{4}\left(  \delta_{s}^{\prime}\right)  ^{2} \\
&  = 1- \delta_{s}^2 / 8 \label{eq:bipartite-NO-analysis-3}
\end{align}

In the above, we have separated the probability of accepting and the
probability of rejecting by an inverse polynomial in $n_{A}$ (namely,
from the promise that the difference
$\delta_{s}^2 / 8  -2\sqrt{\delta_{c}}$ is at least an inverse polynomial in the
circuit size), and it is known that an inverse polynomial gap is sufficient to
place this protocol in \textsf{QIP(2)} (see Section~3.2 of Ref.~\cite{JUW09}
for how to amplify an inverse polynomial gap). Thus, we have given a
two-message quantum interactive proof system that decides the quantum
separability problem.
\end{proof}

\begin{remark}If one considers the characterization of \textsf{QIP(2)} in terms of the
complete promise problem \textsf{CLOSE-IMAGE}, one has to identify a state and
a channel to compare by means of the fidelity maximized over all inputs to the
channel. The natural state to consider for \textsf{QSEP-STATE}$_{\operatorname{1,1-LOCC}}$\ is
$\rho_{AB}$, while\ the natural channel to test for $k$-extendibility is one
that applies a random permutation to the systems $B$, $B_{2}$, \ldots, $B_{k}$
of a state $\sigma_{ABB_{2}\cdots B_{k}}$ and traces out the systems $B_{2}$,
\ldots, $B_{k}$. The Stinespring dilation of the channel is the circuit for
phase estimation over the symmetric group, so that in this sense, we can say
that \textsf{CLOSE-IMAGE}\ finds the phase estimation circuit given in
Figure~\ref{fig:qsep-qip2}.\end{remark}

\begin{remark}From the above proof, it is clear that if the promise regarding \textsf{QSEP-STATE}$_{\operatorname{1,1-LOCC}}$
is given in terms of fidelities
rather than the trace distance and the 1-LOCC distance, then the promise concerning the gap between $\delta_s$ and
$\delta_c$ could be that their difference is larger than an inverse polynomial (rather than making a promise
about the difference $\delta_{s}^2 / 8  -2\sqrt{\delta_{c}}$).\end{remark}

\begin{remark}The above two-message quantum interactive proof system also
solves a more traditional formulation of the quantum separability problem,
where the promise is that either 1) $\rho_{AB} \in \mathcal{S}$
or 2) $ \min_{\sigma_{AB} \in \mathcal{S}}
\Vert \rho_{AB} - \sigma_{AB} \Vert_2 \geq \delta_s$, where $\delta_s$ is
larger than an inverse polynomial in the circuit size. This follows from applying
the bound in (\ref{eq:matthews-bound}), due to Matthews et al.~\cite{MWW09}.
\end{remark}


\section{\textsf{QSZK}-hardness of the quantum separability problem}

\label{sec:qsep-qszk-hard}Having placed an upper bound on the difficulty of solving \textsf{QSEP-STATE}$_{\operatorname{1,1-LOCC}}$, we now move on to lower bounds, beginning in this section 
with a proof that it is hard for
\textsf{QSZK}. Our approach is to exhibit a Karp reduction from the
\textsf{QSZK}-complete promise problem \textsf{QSD}\ to \textsf{QSEP-STATE}$_{\operatorname{1,1-LOCC}}$.
The essential idea behind the reduction is similar to Rosgen and Watrous's
reduction of \textsf{CLOSE-IMAGES} to \textsf{QCD} \cite{RW05,R09}.

In order to demonstrate this reduction, we have to show that there is a
polynomial-time algorithm that encodes YES\ instances of \textsf{QSD}\ into
YES\ instances of \textsf{QSEP-STATE}$_{\operatorname{1,1-LOCC}}$ and the same for the NO instances.
Recall that for \textsf{QSD}, we are given a description of circuits
$U_{\rho_{0}}$ and $U_{\rho_{1}}$\ that generate mixed states $\rho_{0}$ and
$\rho_{1}$. The output qubits of the circuit are divided into two
sets:\ qubits in a reference system $R$\ that are traced over and qubits in a
system$~S$ which contains $\rho_{i}$. For $i\in\left\{  0,1\right\}  $, let%
\[
\left\vert \psi_{\rho_{i}}\right\rangle _{RS}\equiv U_{\rho_{i}}\left\vert
0\right\rangle ,
\]
so that%
\[
\rho_{i}=\text{Tr}_{R}\left\{  \left\vert \psi_{\rho_{i}}\right\rangle
\left\langle \psi_{\rho_{i}}\right\vert _{RS}\right\}  .
\]%
\begin{figure}
[ptb]
\begin{center}
\includegraphics[
natheight=2.412800in,
natwidth=2.239900in,
width=2.2796in
]%
{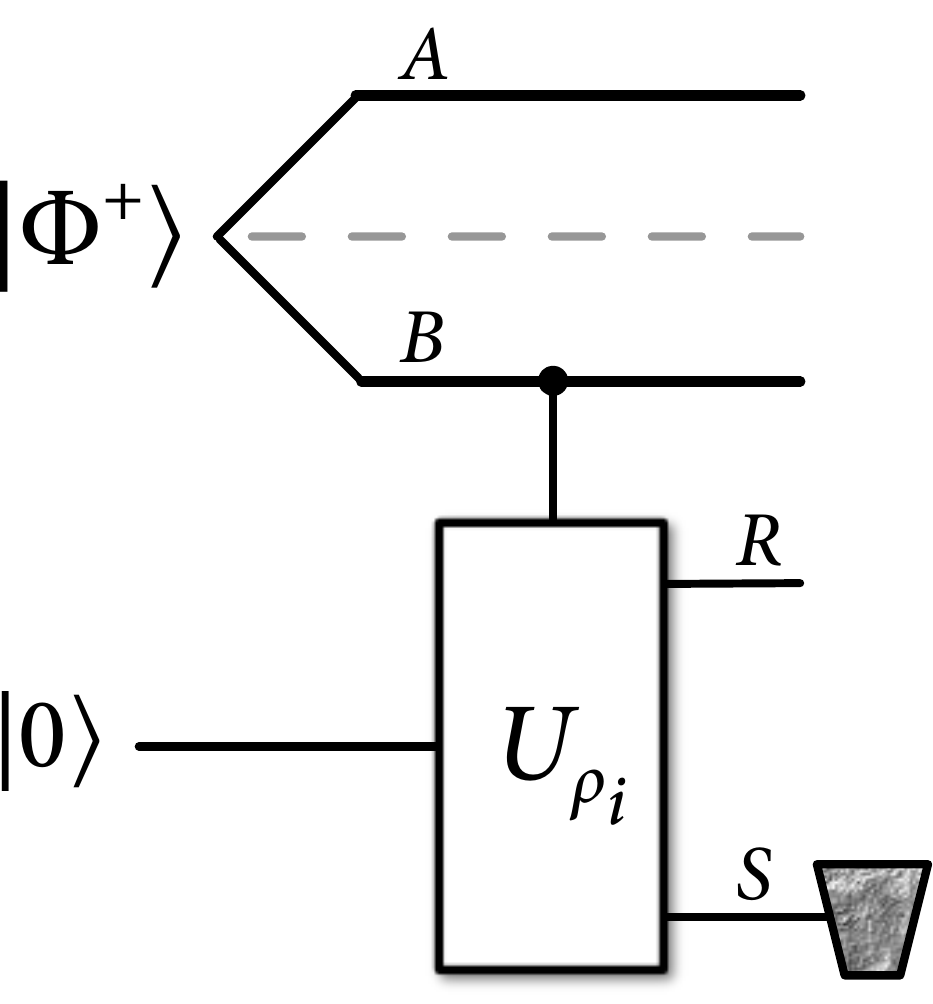}%
\caption{Given respective circuit descriptions $U_{\rho_{0}}$ and $U_{\rho
_{1}}$ for generating the states $\rho_{0}$ and $\rho_{1}$ on the output
system $S$, one can compute a description for the above circuit in polynomial
time, and furthermore, the above circuit can be run efficiently on a quantum
computer. This serves as a reduction from
\textsf{QUANTUM-STATE-DISTINGUISHABILITY} to \textsf{QSEP-STATE}$_{\operatorname{1,1-LOCC}}$, i.e.,
where one should decide if the state on systems $A$ and $BR$ is separable with
respect to this cut.}%
\label{fig:QSD-to-QSEP}%
\end{center}
\end{figure}

Figure~\ref{fig:QSD-to-QSEP}\ depicts a circuit that accomplishes the
reduction. From the description of the circuits $U_{\rho_{0}}$ and
$U_{\rho_{1}}$, one can generate a description of the circuit in
Figure~\ref{fig:QSD-to-QSEP}\ in polynomial time, and furthermore, the
resulting circuit runs efficiently on a quantum computer \cite{R09}. The
circuit takes as input a Bell state%
\[
\left\vert \Phi^{+}\right\rangle _{AB}\equiv\frac{1}{\sqrt{2}}\left(
\left\vert 00\right\rangle _{AB}+\left\vert 11\right\rangle _{AB}\right)  ,
\]
and performs the following controlled unitary from the qubit $B$ to the
ancilla qubits:%
\[
\left\vert 0\right\rangle \left\langle 0\right\vert _{B}\otimes U_{\rho_{0}%
}+\left\vert 1\right\rangle \left\langle 1\right\vert _{B}\otimes U_{\rho_{1}%
}.
\]
The resulting state is as follows:%
\[
\left\vert \varphi\right\rangle _{ABRS}\equiv\frac{1}{\sqrt{2}}\left(
\left\vert 0\right\rangle _{A}\left\vert 0\right\rangle _{B}\left\vert
\psi_{\rho_{0}}\right\rangle _{RS}+\left\vert 1\right\rangle _{A}\left\vert
1\right\rangle _{B}\left\vert \psi_{\rho_{1}}\right\rangle _{RS}\right)  .
\]
The output qubits are divided into three sets:\ environment qubits in the
system $S$ that are traced over, a single qubit in system $A$, and qubits in
systems $BR$. Thus, the state resulting from applying the circuit in
Figure~\ref{fig:QSD-to-QSEP}\ is as follows:%
\begin{equation}
\omega_{A:BR}\equiv\text{Tr}_{S}\left\{  \left\vert \varphi\right\rangle
\left\langle \varphi\right\vert _{ABRS}\right\}
.\label{eq:state-generated-by-reduction}%
\end{equation}
The task is to decide whether the state on systems $A$\ and $BR$ is separable
across this cut, subject to the promise in Problem~\ref{prob:QSEP-CIRCUIT}.
Our claim is that YES\ instances of \textsf{QSD}\ map to YES\ instances of
\textsf{QSEP-STATE}$_{\operatorname{1,1-LOCC}}$, with the same holding true for NO\ instances.

The intuition for why this reduction works is as follows. In the case of a
YES\ instance of \textsf{QSD}, the states $\rho_{0}$ and $\rho_{1}$ are
approximately orthogonal, so that tracing out the $S$ system of the circuit in
Figure~\ref{fig:QSD-to-QSEP}\ decoheres the Bell state, leaving a state on $A$
and $BR$ close to the following state:%
\begin{equation}
\omega_{A:BR}^{\text{sep}}\equiv\frac{1}{2}\left(  \left\vert 0\right\rangle
\left\langle 0\right\vert _{A}\otimes\left\vert 0\right\rangle \left\langle
0\right\vert _{B}\otimes\left(  \psi_{\rho_{0}}\right)  _{R}+\left\vert
1\right\rangle \left\langle 1\right\vert _{A}\otimes\left\vert 1\right\rangle
\left\langle 1\right\vert _{B}\otimes\left(  \psi_{\rho_{1}}\right)
_{R}\right)  . \label{eq:reduction-sep-state}%
\end{equation}
The above state is clearly separable with respect to the bipartite cut $A:BR$. In the case of a NO\ instance of \textsf{QSD}, the states $\rho_{0}$ and
$\rho_{1}$ are approximately indistinguishable, and tracing over the
$S$\ system of the circuit in Figure~\ref{fig:QSD-to-QSEP}\ does little to
decohere the entanglement shared between $A$ and $BR$. Thus, Bob can perform a
local unitary operation on systems $B$ and $R$ to distill out a pure Bell
state shared between $A$ and $B$. After this, Alice and Bob can perform a Bell
experiment on the distilled Bell state to determine if they indeed share a
Bell state. Since these two operations can be performed with local operations
and one message of classical communication, the resulting state is 1-LOCC
distinguishable from the set of separable states.

We now give a formal proof to justify this reduction:

\begin{theorem}
\label{thm:QSEP-QSZK-hard}\textsf{QSEP-STATE}$_{\operatorname{1,1-LOCC}}$ with constant promise gap is \textsf{QSZK}-hard.
\end{theorem}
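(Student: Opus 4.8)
The plan is to show that the circuit of Figure~\ref{fig:QSD-to-QSEP} is a polynomial-time Karp reduction from \textsf{QSD}\ (with a suitably small constant $\varepsilon\in[0,1)$, which is still a \textsf{QSZK}-complete regime \cite{W02}) to \textsf{QSEP-STATE}$_{\operatorname{1,1-LOCC}}$$\left(\delta_{c},\delta_{s}\right)$ with $\delta_{c}=\sqrt{\varepsilon}$ and $\delta_{s}=\tfrac{2}{3}-\sqrt{2\varepsilon}$. Since the circuit is efficiently computable from $U_{\rho_{0}},U_{\rho_{1}}$ and runs efficiently on a quantum computer \cite{R09}, and since for $\varepsilon$ a small enough constant the quantity $\delta_{s}^{2}/8-2\sqrt{\delta_{c}}$ (and a fortiori $\delta_{s}-\delta_{c}$) is a positive constant, establishing the two implications below proves the theorem. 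A preliminary step: by padding $U_{\rho_{0}}$ and $U_{\rho_{1}}$ with additional traced-out $\left\vert 0\right\rangle $ ancillas (which changes neither $\rho_{0},\rho_{1}$ nor the YES/NO status of the \textsf{QSD}\ instance), I may assume the reference register $R$ has at least $n$ qubits, so that Uhlmann's theorem (\ref{eq:uhlmann-thm}) applies with equality to the purifications $\left\vert \psi_{\rho_{0}}\right\rangle _{RS}$ and $\left\vert \psi_{\rho_{1}}\right\rangle _{RS}$.

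For the completeness (YES) direction I would start from $\left\Vert \rho_{0}-\rho_{1}\right\Vert _{1}\geq2-\varepsilon$, which by (\ref{eq:FvG-ineqs}) gives $F(\rho_{0},\rho_{1})\leq1-\tfrac{1}{4}\left\Vert \rho_{0}-\rho_{1}\right\Vert _{1}^{2}\leq\varepsilon$. Expanding $\left\vert \varphi\right\rangle \left\langle \varphi\right\vert _{ABRS}$, tracing out $S$, and subtracting the separable state $\omega_{A:BR}^{\mathrm{sep}}$ of (\ref{eq:reduction-sep-state}) leaves only the off-diagonal ($\left\vert 00\right\rangle _{AB}\leftrightarrow\left\vert 11\right\rangle _{AB}$) terms, so that $\left\Vert \omega_{A:BR}-\omega_{A:BR}^{\mathrm{sep}}\right\Vert _{1}=\left\Vert \mathrm{Tr}_{S}\{\left\vert \psi_{\rho_{0}}\right\rangle \left\langle \psi_{\rho_{1}}\right\vert _{RS}\}\right\Vert _{1}$; by the variational characterization of the trace norm this equals $\max_{U_{R}}\left\vert \left\langle \psi_{\rho_{1}}\right\vert (U_{R}\otimes I_{S})\left\vert \psi_{\rho_{0}}\right\rangle \right\vert =\sqrt{F(\rho_{0},\rho_{1})}\leq\sqrt{\varepsilon}$, the middle equality being Uhlmann's theorem. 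Since $\omega_{A:BR}^{\mathrm{sep}}\in\mathcal{S}$, this yields a YES instance of \textsf{QSEP-STATE}$_{\operatorname{1,1-LOCC}}$ with $\delta_{c}=\sqrt{\varepsilon}$.

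For the soundness (NO) direction I would start from $\left\Vert \rho_{0}-\rho_{1}\right\Vert _{1}\leq\varepsilon$, so by (\ref{eq:FvG-ineqs}) $\sqrt{F(\rho_{0},\rho_{1})}\geq1-\varepsilon/2$, and by Uhlmann's theorem there is a unitary $U_{R}$ such that $\left\langle \psi_{\rho_{0}}\right\vert (U_{R}^{\dag}\otimes I_{S})\left\vert \psi_{\rho_{1}}\right\rangle $ is real and at least $1-\varepsilon/2$. Bob (who holds $B$ and $R$) applies to his registers the controlled unitary $\left\vert 0\right\rangle \left\langle 0\right\vert _{B}\otimes I_{R}+\left\vert 1\right\rangle \left\langle 1\right\vert _{B}\otimes U_{R}^{\dag}$ and then discards $R$; these are local operations on Bob's side, so they map $\mathcal{S}$ into the states separable across $A:B$ and cannot increase the 1-LOCC distance. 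One checks that the post-unitary global vector is within Euclidean distance $\sqrt{\varepsilon/2}$ of $\left\vert \Phi^{+}\right\rangle _{AB}\otimes\left\vert \psi_{\rho_{0}}\right\rangle _{RS}$, so after tracing out $S$ and $R$ the remaining two-qubit state $\omega_{AB}^{\prime}$ satisfies $\left\Vert \omega_{AB}^{\prime}-\left\vert \Phi^{+}\right\rangle \left\langle \Phi^{+}\right\vert _{AB}\right\Vert _{1}\leq\sqrt{2\varepsilon}$. The remaining ingredient is that $\left\vert \Phi^{+}\right\rangle \left\langle \Phi^{+}\right\vert _{AB}$ is 1-LOCC-far from every state $\tau_{AB}$ separable across $A:B$: every such $\tau_{AB}$ has singlet fraction $\left\langle \Phi^{+}\right\vert \tau_{AB}\left\vert \Phi^{+}\right\rangle \leq\tfrac{1}{2}$, and the quantum-to-classical channel $\Lambda_{B\rightarrow X}$ that picks one of the three Paulis uniformly at random, measures it on Bob's qubit, and reports both the chosen Pauli and the outcome (followed by a matched Pauli measurement on Alice's side and a decision according to the correlation pattern of $\left\vert \Phi^{+}\right\rangle \left\langle \Phi^{+}\right\vert =\tfrac{1}{4}(I\otimes I+X\otimes X-Y\otimes Y+Z\otimes Z)$) accepts $\left\vert \Phi^{+}\right\rangle \left\langle \Phi^{+}\right\vert _{AB}$ with probability $1$ and any $\tau_{AB}$ with probability $\tfrac{1}{3}+\tfrac{2}{3}\left\langle \Phi^{+}\right\vert \tau_{AB}\left\vert \Phi^{+}\right\rangle \leq\tfrac{2}{3}$; hence $\left\Vert \left\vert \Phi^{+}\right\rangle \left\langle \Phi^{+}\right\vert _{AB}-\tau_{AB}\right\Vert _{\text{1-LOCC}}\geq\tfrac{2}{3}$. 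Combining this with the triangle inequality, with $\left\Vert \ast\right\Vert _{\text{1-LOCC}}\leq\left\Vert \ast\right\Vert _{1}$, and with the monotonicity of the 1-LOCC distance under Bob's local processing gives $\min_{\sigma_{AB}\in\mathcal{S}}\left\Vert \omega_{A:BR}-\sigma_{AB}\right\Vert _{\text{1-LOCC}}\geq\tfrac{2}{3}-\sqrt{2\varepsilon}=\delta_{s}$, a NO instance.

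I expect the soundness direction to be the main obstacle. Two points need care: first, that Bob can genuinely re-coherentify the decohered Bell pair with a local unitary---this requires recognizing that unitary as the Uhlmann unitary relating $\rho_{0}$ and $\rho_{1}$ and checking that it acts only within $R$ (which is the reason for the harmless padding of $R$, so that Uhlmann gives an equality rather than merely an inequality); and second, exhibiting an explicit one-way LOCC test that certifies distance from the \emph{entire} separable set with a constant bias, which reduces to verifying that the randomized-Pauli measurement is an admissible channel $\Lambda_{B\rightarrow X}$ in the definition of $\left\Vert \ast\right\Vert _{\text{1-LOCC}}$ and computing its acceptance probability. The asymmetric choice of norms---trace distance on YES instances but 1-LOCC distance on NO instances---is precisely what lets both implications hold simultaneously.
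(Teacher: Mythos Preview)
Your proof is correct and follows essentially the same reduction as the paper (the circuit of Figure~\ref{fig:QSD-to-QSEP}), with the same high-level structure in both directions: closeness to $\omega_{A:BR}^{\text{sep}}$ on YES instances, and Uhlmann-based distillation of a near-Bell pair on NO instances.

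The technical implementations differ in interesting ways. In the YES direction, the paper constructs explicit purifications of $\omega_{A:BR}$ and $\omega_{A:BR}^{\text{sep}}$ via a ``Helstrom isometry'' and bounds their overlap, obtaining $\left\Vert \omega_{A:BR}-\omega_{A:BR}^{\text{sep}}\right\Vert_{1}\leq 2\sqrt{\varepsilon}$; your direct computation of the off-diagonal block and identification of $\left\Vert \text{Tr}_{S}\{\left\vert \psi_{\rho_{0}}\right\rangle \left\langle \psi_{\rho_{1}}\right\vert \}\right\Vert _{1}$ with $\sqrt{F(\rho_{0},\rho_{1})}$ via Uhlmann is cleaner and yields the sharper bound $\sqrt{\varepsilon}$. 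In the NO direction, the paper uses the CHSH game to separate $\left\vert \Phi^{+}\right\rangle \left\langle \Phi^{+}\right\vert$ from the separable set, obtaining a $1$-LOCC gap of only $0.2$; your randomized-Pauli test exploits the full singlet-fraction bound and achieves $2/3$, which is both stronger and arguably more elementary. Your explicit padding of $R$ so that Uhlmann's theorem gives an exact equality is a nice technical point that the paper leaves implicit.
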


\begin{proof}
We first prove that the circuit in Figure~\ref{fig:QSD-to-QSEP} maps YES
instances of \textsf{QSD} to YES\ instances of \textsf{QSEP-STATE}$_{\operatorname{1,1-LOCC}}$. So we
begin by assuming that%
\begin{equation}
\left\Vert \rho_{0}-\rho_{1}\right\Vert _{1}\geq2-\varepsilon
,\label{eq:co-QSD-YES}%
\end{equation}
and we will use this condition to show that the fidelity between
$\omega_{A:BR}^{\text{sep}}$ in (\ref{eq:reduction-sep-state}) and the reduced
state $\omega_{A:BR}$ in (\ref{eq:state-generated-by-reduction}) is close to
one. So, recall from Uhlmann's theorem in (\ref{eq:uhlmann-thm}) that the
fidelity between $\omega_{A:BR}^{\text{sep}}$ and $\omega_{A:BR}$ is the
maximum squared overlap between any purifications of these states. Thus, if we
can show that the squared overlap between two \textit{particular}
purifications of $\omega_{A:BR}^{\text{sep}}$ and $\omega_{A:BR}$ is large,
then this implies a lower bound on the fidelity between these two states.
Consider the following particular purification of $\omega_{A:BR}^{\text{sep}}%
$:%
\[
|\omega_{ABB^{\prime}RS}^{\text{sep}}\rangle\equiv\frac{1}{\sqrt{2}}\left(
\left\vert 0\right\rangle _{A}\left\vert 0\right\rangle _{B}\left\vert
0\right\rangle _{B^{\prime}}\left\vert \psi_{\rho_{0}}\right\rangle
_{RS}+\left\vert 1\right\rangle _{A}\left\vert 1\right\rangle _{B}\left\vert
1\right\rangle _{B^{\prime}}\left\vert \psi_{\rho_{1}}\right\rangle
_{RS}\right)  .
\]
Recall that the trace distance bound in (\ref{eq:co-QSD-YES}) implies the
existence of a two-outcome projective measurement $\left\{  \Pi_{0},\Pi
_{1}\right\}  $ (known as a Helstrom measurement \cite{H69,Hol72,Hel76}) that
has the following success probability in discriminating $\rho_{0}$ from
$\rho_{1}$ if they are chosen uniformly at random:%
\begin{align}
\frac{1}{2}\text{Tr}\left\{  \Pi_{0}\rho_{0}\right\}  +\frac{1}{2}%
\text{Tr}\left\{  \Pi_{1}\rho_{1}\right\}   &  =\frac{1}{2}\left(  1+\frac
{1}{2}\left\Vert \rho_{0}-\rho_{1}\right\Vert _{1}\right)  \nonumber\\
&  \geq1-\frac{\varepsilon}{2}.\label{eq:helstrom-good}%
\end{align}
Performing the following \textquotedblleft Helstrom isometry\textquotedblright%
\[
U_{S\rightarrow SB^{\prime}}^{H}\equiv\left(  \Pi_{0}\right)  _{S}%
\otimes\left\vert 0\right\rangle _{B^{\prime}}+\left(  \Pi_{1}\right)
_{S}\otimes\left\vert 1\right\rangle _{B^{\prime}}%
\]
on the $S$ system of $\left\vert \varphi\right\rangle _{ABRS}$ produces a
particular purification of the state $\omega_{A:BR}$:%
\[
U_{S\rightarrow SB^{\prime}}^{H}\left\vert \varphi\right\rangle _{ABRS}%
=\frac{1}{\sqrt{2}}\sum_{i,j\in\left\{  0,1\right\}  }\left\vert
i\right\rangle _{A}\left\vert i\right\rangle _{B}\otimes\left\vert
j\right\rangle _{B^{\prime}}\otimes\left(  \Pi_{j}\right)  _{S}\left\vert
\psi_{\rho_{i}}\right\rangle _{RS}.
\]
The overlap between these purifications is%
\begin{align*}
&  \langle\omega_{ABB^{\prime}RS}^{\text{sep}}|U_{S\rightarrow SB^{\prime}%
}^{H}\left\vert \varphi\right\rangle _{ABRS}\\
&  =\frac{1}{2}\left(  \sum_{k\in\left\{  0,1\right\}  }\left\langle
k\right\vert _{A}\left\langle k\right\vert _{B}\left\langle k\right\vert
_{B^{\prime}}\left\langle \psi_{\rho_{k}}\right\vert _{RS}\right)  \left(
\sum_{i,j\in\left\{  0,1\right\}  }\left\vert i\right\rangle _{A}\left\vert
i\right\rangle _{B}\otimes\left\vert j\right\rangle _{B^{\prime}}%
\otimes\left(  \Pi_{j}\right)  _{S}\left\vert \psi_{\rho_{i}}\right\rangle
_{RS}\right)  \\
&  =\frac{1}{2}\sum_{i,j,k\in\left\{  0,1\right\}  }\left\langle
k|i\right\rangle _{A}\ \left\langle k|i\right\rangle _{B}\ \left\langle
k|j\right\rangle _{B^{\prime}}\ \left\langle \psi_{\rho_{k}}\right\vert
_{RS}I_{R}\otimes\left(  \Pi_{j}\right)  _{S}\left\vert \psi_{\rho_{i}%
}\right\rangle _{RS}\\
&  =\frac{1}{2}\sum_{i\in\left\{  0,1\right\}  }\left\langle \psi_{\rho_{i}%
}\right\vert _{RS}I_{R}\otimes\left(  \Pi_{i}\right)  _{S}\left\vert
\psi_{\rho_{i}}\right\rangle _{RS}\\
&  =\frac{1}{2}\text{Tr}\left\{  \Pi_{0}\rho_{0}\right\}  +\frac{1}%
{2}\text{Tr}\left\{  \Pi_{1}\rho_{1}\right\}  \\
&  \geq1-\frac{\varepsilon}{2},
\end{align*}
where the inequality follows from (\ref{eq:helstrom-good}). Squaring the
overlap gives the following lower bound on the fidelity:%
\[
F(\omega_{A:BR}^{\text{sep}},\omega_{A:BR})\geq1-\varepsilon,
\]
which imply by the Fuchs-van-de-Graaf inequalities in (\ref{eq:FvG-ineqs})
that%
\begin{equation}
\min_{\sigma_{A:BR}\in\mathcal{S}}\left\Vert \omega_{A:BR}-\sigma
_{A:BR}\right\Vert _{1}\leq2\sqrt{\varepsilon}.\label{eq:final-QSD-norm-bound}%
\end{equation}
Thus, the circuit in Figure~\ref{fig:QSD-to-QSEP} transforms a YES\ instance
of \textsf{QSD} to a YES instance of \textsf{QSEP-STATE}$_{\operatorname{1,1-LOCC}}$. We note that the
above argument is reminiscent of similar ones from quantum information theory
\cite{D05}.

We now prove that the circuit in Figure~\ref{fig:QSD-to-QSEP} transforms
NO\ instances of \textsf{QSD} into NO instances of \textsf{QSEP-STATE}$_{\operatorname{1,1-LOCC}}$. In
this case, we have the promise that the states $\rho_{0}$ and $\rho_{1}$ are
nearly indistinguishable:%
\[
\left\Vert \rho_{0}-\rho_{1}\right\Vert _{1}\leq\varepsilon.
\]
Due to the Fuchs-van-de-Graaf inequalities, we have the following lower bound
on the fidelity:%
\[
F\left(  \rho_{0},\rho_{1}\right)  \geq1-\varepsilon,
\]
and Uhlmann's theorem implies the existence of a unitary operation $U_{R}%
$\ acting on the reference system of $\left\vert \psi_{\rho_{1}}\right\rangle
_{RS}$ such that%
\[
\left\langle \psi_{\rho_{0}}\right\vert _{RS}U_{R}\otimes I_{S}\left\vert
\psi_{\rho_{1}}\right\rangle _{RS}\geq\sqrt{1-\varepsilon}.
\]
(A global phase can be fixed for $U_{R}$ such that the overlap is a real
number.)\ Thus, Bob can apply the following controlled-unitary to the state
$\left\vert \varphi\right\rangle _{ABRS}$:%
\[
C_{BR}^{U}\equiv\left\vert 0\right\rangle \left\langle 0\right\vert
_{B}\otimes I_{R}+\left\vert 1\right\rangle \left\langle 1\right\vert
_{B}\otimes U_{R},
\]
leading to%
\[
\left(  \left\vert 0\right\rangle \left\langle 0\right\vert _{B}\otimes
I_{R}+\left\vert 1\right\rangle \left\langle 1\right\vert _{B}\otimes
U_{R}\right)  \left\vert \varphi\right\rangle _{ABRS}=\frac{1}{\sqrt{2}%
}\left(  \left\vert 0\right\rangle _{A}\left\vert 0\right\rangle
_{B}\left\vert \psi_{\rho_{0}}\right\rangle _{RS}+\left\vert 1\right\rangle
_{A}\left\vert 1\right\rangle _{B}U_{R}\otimes I_{S}\left\vert \psi_{\rho_{1}%
}\right\rangle _{RS}\right)  .
\]
Then the overlap between $\left\vert \Phi^{+}\right\rangle _{AB}%
\otimes\left\vert \psi_{\rho_{0}}\right\rangle _{RS}$ and the resulting state
is large:%
\begin{align*}
&  \frac{1}{2}\left(  \left(  \left\langle 0\right\vert _{A}\left\langle
0\right\vert _{B}+\left\langle 1\right\vert _{A}\left\langle 1\right\vert
_{B}\right)  \otimes\left\langle \psi_{\rho_{0}}\right\vert _{RS}\right)
\left(  \left\vert 0\right\rangle _{A}\left\vert 0\right\rangle _{B}\left\vert
\psi_{\rho_{0}}\right\rangle _{RS}+\left\vert 1\right\rangle _{A}\left\vert
1\right\rangle _{B}U_{R}\otimes I_{S}\left\vert \psi_{\rho_{1}}\right\rangle
_{RS}\right)  \\
&  =\frac{1}{2}+\frac{1}{2}\left\langle \psi_{\rho_{0}}\right\vert _{RS}%
U_{R}\otimes I_{S}\left\vert \psi_{\rho_{1}}\right\rangle _{RS}\\
&  \geq\frac{1}{2}+\frac{1}{2}\sqrt{1-\varepsilon}\\
&  \geq\sqrt{1-\varepsilon},
\end{align*}
implying that the fidelity is larger than $1-\varepsilon$. Thus, by a local
operation, Bob can distill a state which is $2\sqrt{\varepsilon}$-close in
trace distance to the product state $\left\vert \Phi^{+}\right\rangle
_{AB}\otimes\left\vert \psi_{\rho_{0}}\right\rangle _{RS}$:%
\[
\left\Vert C_{BR}^{U}\left\vert \varphi\right\rangle \left\langle
\varphi\right\vert _{ABRS}\left(  C_{BR}^{U}\right)  ^{\dag}-\left\vert
\Phi^{+}\right\rangle \left\langle \Phi^{+}\right\vert _{AB}\otimes\left\vert
\psi_{\rho_{0}}\right\rangle \left\langle \psi_{\rho_{0}}\right\vert
_{RS}\right\Vert _{1}\leq2\sqrt{\varepsilon}.
\]
(We remark that the above argument is similar to a \textquotedblleft
decoupling\textquotedblright\ argument well known in quantum information
theory \cite{D05,ADHW06FQSW}.)

Now, we would like to argue that the one-way LOCC\ distance between
$\omega_{A:BR}$ and the separable state $\sigma_{A:BR}^{\ast}\in S$ closest to
$\omega_{A:BR}$ is larger than an appropriate constant, so that we can claim
that the circuit in Figure~\ref{fig:QSD-to-QSEP}\ maps NO instances of
\textsf{QSD}\ to NO instances of \textsf{QSEP-STATE}$_{\operatorname{1,1-LOCC}}$. In order to do so, Bob
first performs the local unitary $C_{BR}^{U}$. This transforms the state
$\left(  C_{BR}^{U}\right)  ^{\dag}\left(  \Phi_{AB}^{+}\otimes\left(
\psi_{\rho_{0}}\right)  _{R}\right)  C_{BR}^{U}$ to $\Phi_{AB}^{+}%
\otimes\left(  \psi_{\rho_{0}}\right)  _{R}$ and the separable state
$\sigma_{A:BR}^{\ast}$ to some other separable state $\left(  \sigma
_{A:BR}^{\ast}\right)  ^{\prime}$. Alice and Bob then perform a
Bell experiment, guessing the state to be $\left\vert \Phi^{+}\right\rangle
_{AB}$ if there is a violation of a Bell inequality and guessing a separable
state otherwise \cite{bell1964}. Equivalently, Alice and Bob could proceed as
in the CHSH game (a reformulation of a Bell experiment as a nonlocal game
\cite{CHTW04}). In such a protocol, Alice flips a coin $x$\ and chooses one of
two binary-outcome measurements to perform on her qubit. She sends both $x$
and the measurement outcome $a$ to Bob. Bob then flips a coin with outcome $y$
and performs one of two binary-outcome measurements on his qubit, naming the
measurement result $b$. Bob declares the state to be the Bell state in the
case that $x\wedge y=a\oplus b$ (when they \textquotedblleft win the CHSH
game\textquotedblright) and otherwise declares that it is not the Bell state.
It is well known that the winning probability of the CHSH\ game with a Bell
state is equal to $\cos^{2}\left(  \pi/8\right)  \approx0.85$, while the
maximum probability with which they can win this game with a separable state
is equal to $0.75$ \cite{CHTW04}. This gives the following lower bound on
the one-way LOCC\ distance between $\left(  C_{BR}^{U}\right)  ^{\dag}\left(
\Phi_{AB}^{+}\otimes\left(  \psi_{\rho_{0}}\right)  _{R}\right)  C_{BR}^{U}$
and $\sigma_{A:BR}^{\ast}$:%
\begin{align}
\left\Vert \left(  C_{BR}^{U}\right)  ^{\dag}\left(  \Phi_{AB}^{+}%
\otimes\left(  \psi_{\rho_{0}}\right)  _{R}\right)  C_{BR}^{U}-\sigma
_{A:BR}^{\ast}\right\Vert _{1-\text{LOCC}} &  =\left\Vert \Phi_{AB}^{+}%
\otimes\left(  \psi_{\rho_{0}}\right)  _{R}-\left(  \sigma_{A:BR}^{\ast
}\right)  ^{\prime}\right\Vert _{1-\text{LOCC}}\nonumber\\
&  \geq\left\Vert \left(  \cos^{2}\left(  \pi/8\right)  ,\sin^{2}\left(
\pi/8\right)  \right)  -\left(  0.75,0.25\right)  \right\Vert _{1}\nonumber\\
&  \geq0.2.\label{eq:cool-CHSH-bound}%
\end{align}
Thus, by combining with the distillation argument above, we have the following
lower bound on the one-way LOCC\ distance between $\omega_{A:BR}$ and
$\sigma_{A:BR}^{\ast}$:%
\begin{align}
\left\Vert \omega_{A:BR}-\sigma_{A:BR}^{\ast}\right\Vert _{1-\text{LOCC}} &
\geq\left\Vert \left(  C_{BR}^{U}\right)  ^{\dag}\left(  \Phi_{AB}^{+}%
\otimes\left(  \psi_{\rho_{0}}\right)  _{R}\right)  C_{BR}^{U}-\sigma
_{A:BR}^{\ast}\right\Vert _{1-\text{LOCC}}\nonumber\\
&  \ \ \ \ \ -\left\Vert \left(  C_{BR}^{U}\right)  ^{\dag}\left(  \Phi
_{AB}^{+}\otimes\left(  \psi_{\rho_{0}}\right)  _{R}\right)  C_{BR}^{U}%
-\omega_{A:BR}\right\Vert _{1-\text{LOCC}}\nonumber\\
&  \geq\left\Vert \Phi_{AB}^{+}\otimes\left(  \psi_{\rho_{0}}\right)
_{R}-\left(  \sigma_{A:BR}^{\ast}\right)  ^{\prime}\right\Vert _{1-\text{LOCC}%
}\nonumber\\
&  \ \ \ \ \ -\left\Vert \left(  C_{BR}^{U}\right)  ^{\dag}\left(  \Phi
_{AB}^{+}\otimes\left(  \psi_{\rho_{0}}\right)  _{R}\right)  C_{BR}^{U}%
-\omega_{A:BR}\right\Vert _{1}\nonumber\\
&  \geq0.2-2\sqrt{\varepsilon},\label{eq:final-QSD-NO-bound}%
\end{align}
where the second inequality follows from (\ref{eq:difference-trace-LOCC}) and
the fact that $\left(  C_{BR}^{U}\right)  $ is a local unitary, and the third
from (\ref{eq:cool-CHSH-bound}) and the argument at the end of the previous
paragraph. Thus, as long as $\varepsilon$ is small enough (so that
$0.2-4\sqrt{\varepsilon}>0$), there is a gap between
(\ref{eq:final-QSD-norm-bound}) and (\ref{eq:final-QSD-NO-bound}). In fact,
Watrous showed that it is possible to make $\varepsilon$ exponentially small
with only polynomial overhead for any instance of \textsf{QSD}\ \cite{W02} by
exploiting a \textquotedblleft quantized\textquotedblright\ version of the
polarization lemma in \cite{Sahai:1997}. Thus, any protocol for deciding
\textsf{QSEP-STATE}$_{\operatorname{1,1-LOCC}}$\ could also decide \textsf{QSD}, implying that
\textsf{QSEP-STATE}$_{\operatorname{1,1-LOCC}}$\ is \textsf{QSZK}-hard.
\end{proof}

Ideally, we would like to show that \textsf{QSEP-STATE}$_{\operatorname{1,1-LOCC}}$\ is a complete
promise problem for \textsf{QIP(2)}, but it is not clear to us how to do so.
The obvious way to attempt this would be to reduce \textsf{CLOSE-IMAGE}\ to
\textsf{QSEP-STATE}$_{\operatorname{1,1-LOCC}}$, but the problem is that \textsf{CLOSE-IMAGE}\ requires
a general channel, whereas our protocol for \textsf{QSEP-STATE}$_{\operatorname{1,1-LOCC}}$\ has a very
specific channel (one that applies a random permutation to the $B$ systems and
discards the last $k-1$ of them). Alternatively, we could attempt to find a
\textsf{QSZK}\ proof system for \textsf{QSEP-STATE}$_{\operatorname{1,1-LOCC}}$, but the protocol that
we have given to show that \textsf{QSEP-CIRCUIT~}$\in~$\textsf{QIP(2)} does
not satisfy the zero-knowledge property because, in the case of a
YES\ instance, the verifier ends up with a state close to a $k$-extension of
$\rho_{AB}$, which he could not have generated himself using a polynomial-time
quantum circuit.

\section{\textsf{NP}-hardness of the quantum separability problem}

\label{sec:qsep-np-hard}We now prove \textsf{NP}-hardness of
\textsf{QSEP-STATE}$_{\operatorname{1,1-LOCC}}$, with respect to Cook reductions,
by finding a reduction to it from the \textsf{NP}-hard
matrix version of the quantum separability problem. The essence of the
reduction is Knill's efficient encoding of a density matrix description of a
state $\rho_{AB}$ as a description of a quantum circuit to generate it
\cite{Kn95}. We begin by recalling the matrix version of the quantum
separability problem:

\begin{problem}
[WMEM$_{\varepsilon}\left(  M,N\right)  $]Given a density matrix $\rho_{AB}%
\in\mathcal{D}\left(  \mathcal{H}_{M}\otimes\mathcal{H}_{N}\right)  $ with rational entries subject to the
promise that either (i) $\rho_{AB}\in\mathcal{S}$ or (ii) $\min_{\sigma
_{AB}\in\mathcal{S}}\left\Vert \rho_{AB}-\sigma_{AB}\right\Vert _{2}%
\geq\varepsilon$, with $\varepsilon$ no smaller than an inverse polynomial in
$MN$, decide which is the case.
\end{problem}

Gharibian showed that the above promise problem is \textsf{NP}-hard
\cite{G10}, so our task is just to find a Cook reduction from
WMEM$_{\varepsilon}\left(  M,N\right)  $ to \textsf{QSEP-STATE}$_{\operatorname{1,1-LOCC}}$. First,
consider that we can diagonalize the matrix $\rho_{AB}$ in time polynomial in
$MN\log( MN /\varepsilon_{1})$, where $\varepsilon_{1}$ is an error parameter
characterizing the precision of the diagonalization in the trace distance.
We then compute a purification
$\left\vert \phi_{\rho}\right\rangle _{RAB}$\ of $\rho_{AB}$ to a reference
system with dimension no larger than $MN$. Knill's algorithm gives a quantum
circuit running on $O\left(  \log\left(  MN\right)  \right)  $ qubits that
generates the state $\left\vert \phi_{\rho}\right\rangle ^{RAB}$ \cite{Kn95},
and this algorithm runs in time polynomial in $MN$. Knill's algorithm outputs
controlled single-qubit unitary gate descriptions with arbitrary precision, so
we need to invoke the Solovay-Kitaev algorithm \cite{DN06}\ to approximate
each gate in Knill's circuit with unitaries chosen from a finite gate set, up
to precision $\varepsilon_{2}/l$ where $l$ is the number of gates in Knill's
circuit. The Solovay-Kitaev algorithm runs in time
polylogarithmic in $l/\varepsilon_{2}$ and produces a gate sequence with
length polylogarithmic in $l/\varepsilon_{2}$. This whole procedure leads to a mixed state quantum circuit generating a state $\rho_{AB}'$ such that $\| \rho_{AB} - \rho_{AB}' \|_1 \leq \varepsilon_1 + \varepsilon_2$. The state $\rho_{AB}'$ will be used as the input to \textsf{QSEP-STATE}$_{\operatorname{1,1-LOCC}}$$(\delta_c,\delta_s)$.

Setting $\delta_c = \varepsilon_1 + \varepsilon_2$ implies that any instance of $\textsf{WMEM}_\varepsilon$ for which $\rho_{AB} \in \mathcal{S}$, meaning case \emph{(i)} of the promise, gets mapped to a YES instance of \textsf{QSEP-STATE}$_{\operatorname{1,1-LOCC}}$$(\delta_c,\delta_s)$. For case \emph{(ii)}, we know from Matthews
\textit{et al}.~\cite{MWW09}\ that%
\begin{equation}
\min_{\sigma_{AB}\in\mathcal{S}}\left\Vert \rho_{AB}-\sigma_{AB}\right\Vert
_{1-\text{LOCC}}\geq\frac{1}{\sqrt{153}}\min_{\sigma_{AB}\in\mathcal{S}%
}\left\Vert \rho_{AB}-\sigma_{AB}\right\Vert _{2}\geq\frac{\varepsilon}%
{\sqrt{153}}. \label{eq:matthews-bound}
\end{equation}
This in turn implies that
\[
\min_{\sigma_{AB} \in \mathcal{S}} \| \rho_{AB}' - \sigma_{AB} \|_{1-\text{LOCC}}
\geq \frac{\varepsilon}{\sqrt{153}} - \varepsilon_1 - \varepsilon_2
\]
so if we choose $\delta_s = \varepsilon/\sqrt{153} - \varepsilon_1 - \varepsilon_2$ then case \emph{(ii)} gets mapped to a NO instance of \textsf{QSEP-STATE}$_{\operatorname{1,1-LOCC}}$$(\delta_c,\delta_s)$. Moreover, because $\varepsilon_1 + \varepsilon_2$ can be made to shrink exponentially with the circuit size, the gap $\delta_s - \delta_c$ remains inverse polynomial in the circuit size. In particular, the instance of \textsf{QSEP-STATE}$_{\operatorname{1,1-LOCC}}$$(\delta_c,\delta_s)$ will be in $\textsf{QIP(2)}$ for sufficiently small $\varepsilon$, as determined by the promise in Theorem~\ref{thm:QSEP-in-QIP(2)}.

\section{\textsf{QIP}-completeness of the channel quantum separability
problem}

\label{sec:channel-qsep-qip-complete}There is a straightforward variation of
\textsf{QSEP-STATE}$_{\operatorname{1,1-LOCC}}$ which is a complete promise problem for \textsf{QIP(3)}
(and therefore complete for \textsf{QIP} \cite{KW00}).\ In this variation, the
input is a description of a circuit that implements a quantum channel with
input system $S$ and two output systems $A$ and $B$. (The channel is
implemented by a unitary circuit with qubits in an environment system $R$ that
are traced over.) Figure~\ref{fig:circuit-for-channel}\ depicts a circuit that
implements such a channel. The task is to decide whether there is an input to
the channel such that the output state on systems $A$ and $B$ is separable.%
\begin{figure}
[ptb]
\begin{center}
\includegraphics[
natheight=1.166600in,
natwidth=1.793600in,
width=1.8308in
]%
{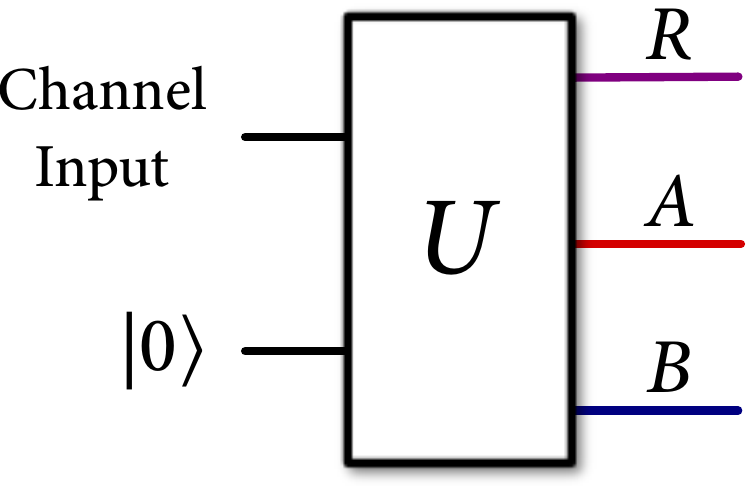}%
\caption{A quantum circuit to implement a channel. The circuit has input
qubits and ancillas in the state $\left\vert 0\right\rangle $. The circuit
outputs qubits in the environment system $R$ (which are traced over) and
qubits in systems $A$ and $B$.}%
\label{fig:circuit-for-channel}%
\end{center}
\end{figure}

\begin{problem}
[\textsf{QSEP-CHANNEL}$_{\operatorname{1,1-LOCC}}$$\left(  \delta_{c},\delta_{s}\right)  $]%
\label{prob:QSEP-CHANNEL}Given is a mixed-state quantum circuit to generate
the channel $\mathcal{N}_{S\rightarrow AB}$, having an $n$-qubit input and an
$m$-qubit output, along with a labeling of the qubits in the environment
system $R$ and the output qubits for $A$ and $B$. Decide whether

\begin{enumerate}
\item Yes:\ There is an input to the channel $\rho_{S}$ such that the channel
output $\mathcal{N}_{S\rightarrow AB}\left(  \rho_{S}\right)  $ is $\delta
_{c}$-close in trace distance to a separable state $\sigma_{AB}\in\mathcal{S}%
$:%
\begin{equation}
\min_{\rho_{S},\ \sigma_{AB}\in\mathcal{S}}\left\Vert \mathcal{N}%
_{S\rightarrow AB}\left(  \rho_{S}\right)  -\sigma_{AB}\right\Vert _{1}%
\leq\delta_{c}.\label{eq:YES-instance-QSEP-CHANNEL}%
\end{equation}

\item No:\ For all channel inputs $\rho_{S}$, the channel output
$\mathcal{N}_{S\rightarrow AB}\left(  \rho_{S}\right)  $ is at least $\delta_{s}$-far
in 1-LOCC$\ $distance to a separable state:%
\[
\min_{\rho_{S},\ \sigma_{AB}\in\mathcal{S}}\left\Vert \mathcal{N}%
_{S\rightarrow AB}\left(  \rho_{S}\right)  -\sigma_{AB}\right\Vert
_{1-\text{LOCC}}\geq\delta_{s}.
\]

\end{enumerate}
\end{problem}

\begin{theorem}
\textsf{QSEP-CHANNEL}$_{\operatorname{1,1-LOCC}}$$\left(  \delta_{c},\delta_{s}\right)  $ is
\textsf{QIP}-complete if there are polynomial-time computable functions
$\delta_{c},\delta_{s}:\mathbb{N}\rightarrow\left[  0,1\right]  $, such that
the difference $\delta_s^2 / 8  -2\sqrt{\delta_{c}}$ is larger
than an inverse polynomial in the circuit size.
\end{theorem}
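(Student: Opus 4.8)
The plan is to prove membership in \textsf{QIP} and \textsf{QIP}-hardness separately. For membership, the idea is to combine the $k$-extendibility proof system used to establish Theorem~\ref{thm:QSEP-in-QIP(2)} with an extra opening round in which the prover supplies the optimal channel input. Concretely, the verifier asks the prover to send a state $\rho_S$, runs the circuit for $\mathcal{N}_{S\rightarrow AB}$ on it (keeping $R$ as the environment to be traced over), and then proceeds exactly as in the \textsf{QSEP-STATE}$_{\operatorname{1,1-LOCC}}$ protocol: it sends $R$ to the prover, receives back the purported extension systems $B_2,\dots,B_k$, and runs the permutation test over the symmetric group. Because the prover is untrusted, a dishonest prover could send an $R$ system that is not a genuine purification of $\mathcal{N}_{S\rightarrow AB}(\rho_S)$; to handle this one invokes the standard trick (as in \cite{RW05,R09}) that the verifier can instead generate a purification of the channel output himself by keeping the environment, provided the channel input $\rho_S$ is itself supplied via a purification $|\phi_S\rangle_{S'S}$ with the purifying system $S'$ held by the prover. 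Then an analysis identical in structure to the proof of Theorem~\ref{thm:QSEP-in-QIP(2)} — using Uhlmann's theorem, the Fuchs--van-de-Graaf inequalities \eqref{eq:FvG-ineqs}, and Lemma~\ref{cor:contra-approx-k-ext} to fix $k$ polynomially large — shows that the maximum acceptance probability is at least $1-2\sqrt{\delta_c}$ on YES instances and at most $1-\delta_s^2/8$ on NO instances, where the NO bound now holds because it holds for \emph{every} channel input. Since one more message pushes the protocol to three messages, this places \textsf{QSEP-CHANNEL}$_{\operatorname{1,1-LOCC}}$ in \textsf{QIP(3)} $=$ \textsf{QIP} for the stated parameter range, after gap amplification.

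For hardness, the plan is a Karp reduction from \textsf{QCD} (Problem~\ref{prob:QCD}), which is \textsf{QIP}-complete by \cite{RW05,R09}. Given circuits $Q_0$ and $Q_1$ for \textsf{QCD}, one mimics the construction of Figure~\ref{fig:QSD-to-QSEP} but now with a \emph{free} input: instead of feeding in the fixed Bell state $|\Phi^+\rangle_{AB}$, the new circuit takes a quantum input on a register that will play the role of the control/Alice system, couples it to a fresh system via a CNOT (or retains a reference to it), and then applies the controlled circuit $|0\rangle\langle 0|_B\otimes Q_0 + |1\rangle\langle 1|_B\otimes Q_1$ on the input to $Q_0,Q_1$ supplied through yet another free input register $S$; the environment of $Q_0,Q_1$ is traced out, $A$ is one output system and $BR$ is the other. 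The claim is that YES instances of \textsf{QCD} (there exists an input making the circuit outputs distinguishable) map to YES instances of \textsf{QSEP-CHANNEL}$_{\operatorname{1,1-LOCC}}$ (there exists a channel input making the bipartite output separable), and NO instances to NO instances. The YES direction reuses the Helstrom-isometry argument from the proof of Theorem~\ref{thm:QSEP-QSZK-hard}: the optimal distinguishing input to the circuits, fed through $S$, decoheres the control register, producing a state close to $\omega^{\text{sep}}_{A:BR}$ in \eqref{eq:reduction-sep-state}. The NO direction reuses the decoupling/CHSH argument: if no input distinguishes the circuits, then for \emph{every} channel input Bob can distill a near-perfect Bell state by a local unitary on $BR$, and the CHSH-game bound \eqref{eq:cool-CHSH-bound} forces the 1-LOCC distance from $\mathcal{S}$ to be a positive constant.

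The main obstacle I expect is the soundness/NO-instance analysis of the reduction, specifically making the decoupling argument uniform over \emph{all} channel inputs $\rho_S$ rather than over a fixed pure state. In Theorem~\ref{thm:QSEP-QSZK-hard} the input Bell state was fixed, so Uhlmann's theorem gave a single unitary $U_R$; here the adversarial channel input ranges over all mixed states on $S$, and one must argue that for each such input the reduced state on $A:BR$ is still far in 1-LOCC distance from $\mathcal{S}$. The resolution should be to reduce to pure inputs by convexity and purification (an optimal channel input can be taken pure on a possibly enlarged space, with the purifying system handed to the verifier or traced out), and then observe that the \textsf{QCD} NO-promise is exactly the statement that \emph{no} purified input distinguishes $(I_R\otimes Q_0)$ from $(I_R\otimes Q_1)$ in trace distance; applying Uhlmann's theorem to each such input and following the CHSH argument verbatim then gives the uniform lower bound. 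One also needs to check, as in Section~\ref{sec:qsep-np-hard}, that the various approximation errors (Solovay--Kitaev, finite precision in building the reduction circuit) can be driven down exponentially so that the constant promise gap of \textsf{QCD} survives as an inverse-polynomial — indeed constant — gap for \textsf{QSEP-CHANNEL}$_{\operatorname{1,1-LOCC}}$, which together with the parameter hypothesis of the theorem keeps the problem inside \textsf{QIP}. Combining membership and hardness yields \textsf{QIP}-completeness.
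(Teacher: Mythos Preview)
Your membership argument is essentially the paper's: add one opening message in which the prover supplies the channel input, then run the \textsf{QSEP-STATE}$_{\operatorname{1,1-LOCC}}$ protocol verbatim, obtaining the same completeness bound $1-2\sqrt{\delta_c}$ and soundness bound $1-\delta_s^2/8$. (The worry you raise about the prover ``sending an $R$ system that is not a genuine purification'' is misplaced: in the protocol it is the \emph{verifier} who produces $R$ by running the channel unitary and then forwards it; the prover ends up holding $S'R$, where $S'$ is whatever he kept when sending $\rho_S$, and this is automatically a full purifying system for the state on $AB$. No extra trick is needed.)

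The hardness reduction, however, has a genuine gap. You propose to make the control/Alice register a \emph{free input} to the new channel (``the new circuit takes a quantum input on a register that will play the role of the control/Alice system''). This breaks the NO direction. Regardless of whether the \textsf{QCD} instance is YES or NO, an adversarial channel input can set that control register to $|0\rangle$; after the CNOT one obtains $|00\rangle_{AB}$, the controlled circuit applies only $Q_0$, and the resulting output on $A:BR$ is the product state $|0\rangle\langle 0|_A \otimes |0\rangle\langle 0|_B \otimes (\cdot)_R$, which is exactly separable across the $A:BR$ cut. Hence there \emph{always} exists an input making the output separable, so the image of your reduction never lands in the NO set of \textsf{QSEP-CHANNEL}$_{\operatorname{1,1-LOCC}}$, and NO instances of \textsf{QCD} fail to map to NO instances. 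Your decoupling/CHSH argument for the NO case tacitly assumes the Bell state was actually prepared, which is precisely what this cheating input avoids.

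The fix---and this is what the paper does---is simply not to free the Bell state. The reduction circuit prepares $|\Phi^+\rangle_{AB}$ internally, exactly as in Figure~\ref{fig:QSD-to-QSEP}, and the \emph{only} free input is the input register to $Q_0,Q_1$ (together with any reference system needed for the \textsf{QCD} maximization). With the Bell state hard-wired, the NO promise of \textsf{QCD} (no input distinguishes the channels) lets the decoupling/CHSH argument run uniformly over all remaining channel inputs, just as you outline in your ``main obstacle'' paragraph. With that single change, your plan coincides with the paper's proof.
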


\begin{proof}
The proof of this theorem is almost identical to the proofs of
Theorems~\ref{thm:QSEP-in-QIP(2)} and \ref{thm:QSEP-QSZK-hard}.

We first show that there is a three-message quantum interactive proof system
for \textsf{QSEP-CHANNEL}$_{\operatorname{1,1-LOCC}}$. This is just the obvious modification of the
circuit in Figure~\ref{fig:qsep-qip2} so that it becomes a three-message proof
system as in Figure~\ref{fig:qip-3-and-2}(a). In particular, the prover first
prepares a state and sends it to the verifier. The verifier inputs this state
to the circuit that implements the channel $\mathcal{N}_{S\rightarrow AB}$,
and the rest of the proof system proceeds as in Figure~\ref{fig:qsep-qip2}. In
the case of a positive instance, the prover can compute the states $\rho_{S}$
and $\sigma_{AB}$ in (\ref{eq:YES-instance-QSEP-CHANNEL}) from the description
of the channel $\mathcal{N}_{S\rightarrow AB}$. He generates $\rho_{S}$ with
his first unitary operation and then proceeds by choosing his second unitary
operation as if the state $\mathcal{N}_{S\rightarrow AB}\left(  \rho
_{S}\right)  $ were $\sigma_{AB}$. Following the same analysis as in the proof
of Theorem~\ref{thm:QSEP-in-QIP(2)}, the maximum probability with which the
verifier accepts in this case is no smaller than $1-2\sqrt{\delta_{c}}$. In
the case of a negative instance, by Lemma~\ref{cor:contra-approx-k-ext}, for
every state $\mathcal{N}_{S\rightarrow AB}\left(  \rho_{S}\right)  $, there is
some $k$ polynomial in the circuit size such that the maximum probability with
which the prover can make the verifier accept is no larger than $1-\delta_s^2 / 8  $.
An upper bound on the maximum acceptance probability is%
\[
\max_{\omega_{S},\ \sigma_{AB}\in\mathcal{E}_{k}}F\left(  \mathcal{N}%
_{S\rightarrow AB}\left(  \omega_{S}\right)  ,\sigma_{AB}\right)  ,
\]
a formula which follows from (\ref{eq:max-output-fidelity}) and our previous
analysis in the proof of Theorem~\ref{thm:QSEP-in-QIP(2)}. This leaves a gap
of $\delta_s^2 / 8  -2\sqrt{\delta_{c}}$ between completeness and
soundness error (promised to be larger than an inverse polynomial) and it is
known that this gap can be amplified \cite{KW00}. Thus, \textsf{QSEP-CHANNEL}$_{\operatorname{1,1-LOCC}}$%
$\left(  \delta_{c},\delta_{s}\right)  $\textsf{\ }$\in\ $\textsf{QIP}.

To show that \textsf{QSEP-CHANNEL\ }is$\ $\textsf{QIP}-hard,\ it suffices to
exhibit a reduction from the \textsf{QIP}-complete promise problem
\textsf{QCD}\ (Problem~\ref{prob:QCD}) to \textsf{QSEP-CHANNEL}$_{\operatorname{1,1-LOCC}}$. This
reduction is essentially the same as that in the proof of
Theorem~\ref{thm:QSEP-QSZK-hard}, except that the circuit in
Figure~\ref{fig:QSD-to-QSEP}\ is modified so that the unitaries being
controlled are the unitaries that generate the channels (rather than the ones
that generate the states). In the case of a positive instance of \textsf{QCD},
there exists an input to the channels such that their outputs are nearly
distinguishable, so that the output of the modified circuit is nearly
separable. Also, in the case of a negative instance, the outputs of the
channels for all inputs are nearly indistinguishable, so that it is possible
to distill a Bell state from the output state of the modified circuit. The
CHSH game argument then applies as well. Thus, \textsf{QSEP-CHANNEL\ }%
is$\ $\textsf{QIP}-hard.
\end{proof}

\section{A two-message quantum interactive proof system decides the multipartite quantum
separability problem}

\label{sec:qmultisep-in-qip2}
After their seminal work on $k$-extendibility as a test of separability
for bipartite states \cite{DPS02,DPS04},
Doherty {\it et al}.~developed a notion of $k$-extendibility for multipartite quantum states \cite{DPS05}.
Recently, Brand\~{a}o and Christandl exploited this notion to construct
a quasi-polynomial time algorithm that decides a variant of the multipartite
quantum separability problem \cite{BC12}.
Brand{\~{a}}o and Harrow then improved the runtime of the algorithm by proving a stronger
quantum de~Finetti theorem that is applicable to the multipartite case \cite{BH12}.

In this section, we extend our results 
from Section~\ref{sec:qsep-in-qip2} to the multipartite case by using the results in \cite{BH12}. In particular,
we formulate a variant of the multipartite quantum separability problem
that we name \textsf{MULTI-QSEP-STATE}$_{\operatorname{1,1-LOCC}}$.

We begin with a few definitions before proceeding to the main theorem of this section.
A multipartite quantum state with $l$ parties $A_1, \ldots, A_l$ is fully separable if it can be written in the following form:
\begin{equation}
\sigma_{A_1 : \, \cdots\,  : A_l}=\sum_{x\in\mathcal{X}}p_{X}\left(  x\right)  \ \vert \psi^1
_{x}\rangle \langle \psi^1_{x}\vert _{A_1}\otimes
\cdots \otimes
\vert
\psi^l_{x}\rangle \langle \psi^l_{x}\vert _{A_l} \label{eq:fully-sep}
\end{equation}
Let $\mathcal{S}$ denote the set of $l$-partite fully separable states. 
(We omit the dependence on the number of parties, $l$, which should be clear from context.)

The notion of $k$-extendibility extends in a natural way to multipartite systems.
For notational simplicity, 
we refer to the total system which we are extending as $C$, and the $l$ subsystems of $C$ as 
$A_1, A_2, \dots, A_l$. For example, we abbreviate $\sigma_{A_1 :\, \cdots\, : A_l}$ simply as
$\sigma_{C}$.
A multipartite state $\rho_C = \rho_{A_1:\,\cdots\, : A_l}\in\mathcal{D}\left(  \mathcal{H}_{A}%
\otimes \cdots \otimes \mathcal{H}_{A_l}\right)  $
is $k$-extendible \cite{DPS05}\ if
there exists a state $\omega_{C C_{2}\cdots C_{k}}\in\mathcal{D}\left(
\mathcal{H}_{C}\otimes\mathcal{H}_{C_{2}}\otimes\cdots\otimes\mathcal{H}%
_{C_{k}}\right)  $ such that

\begin{enumerate}
\item Each Hilbert space $\mathcal{H}_{C_{i,j}}$ is isomorphic to $\mathcal{H}_{A_j}$
for all $i\in\left\{  1,\ldots,k\right\}  $ and $j \in \{1, \ldots, l\}$. (We are using
the notation $C_{i,j}$ to refer to the $j^{\text{th}}$ subsystem of $C_i$.)

\item For all parties $j \in \{1, \ldots, l\}$, the state
$\omega_{C C_{2}\cdots C_{k}}$ is invariant under permutations
of the systems $C_{1,j}$ through $C_{k,j}$.
Note that there are $l\cdot k!$ such permutations.

\item The state $\omega_{C C_{2}\cdots C_{k}}$ is an extension of $\rho_C$:%
\[
\rho_C=\text{Tr}_{C_{2}\cdots C_{k}}\left\{  \omega_{C C_{2}\cdots C_{k}} \right\}  .
\]

\end{enumerate}

Let $\mathcal{E}_{k}$ denote the set of $k$-extendible states for $l$ parties (again
suppressing the dependence of $\mathcal{E}_{k}$ on $l$ as it should be clear from context).
A fully separable state
$\sigma_{A_1 : \, \cdots\,  : A_l}$ of the form in (\ref{eq:fully-sep})
has a $k$-extension of the following form for all $k$:
\begin{equation}
\sum_{x\in\mathcal{X}}p_{X}\left(  x\right)  \ \left(\vert \psi^1
_{x}\rangle \langle \psi^1_{x}\vert _{A_1}\right)^{\otimes k}\otimes
\cdots \otimes
\left(\vert
\psi^l_{x}\rangle \langle \psi^l_{x}\vert _{A_l}\right)^{\otimes k},
\end{equation}
which can be purified as
\begin{equation}
\ket{\phi_{\sigma,k}}_{R^{\prime} C C_2 \cdots C_k} \equiv \sum_{x\in\mathcal{X}}\sqrt{p_{X}\left(  x\right)} \vert x\rangle_{R^{\prime}}  \ \left(\vert \psi^1
_{x}\rangle_{A_1} \right)^{\otimes k}\otimes
\cdots \otimes
\left(\vert
\psi^l_{x}\rangle _{A_l}\right)^{\otimes k}%
\end{equation}

\begin{problem}
[\textsf{MULTI-QSEP-STATE}$_{\operatorname{1,1-LOCC}}$$\left(  \delta_{c},\delta_{s}\right)  $]%
\label{prob:QMULTISEP-CIRCUIT}Given is a mixed-state quantum circuit to generate
the $n$-qubit state $\rho_{C}$, along with a labeling of the qubits in the
reference system $R$ and the output qubits for each system $A_1, \ldots, A_l \in C$. 
Decide whether

\begin{enumerate}
\item Yes:\ There is a fully separable state $\sigma_{C}\in\mathcal{S}$\ that is $\delta_c$-close to $\rho_C$ in trace distance:%
\[
\min_{\sigma_{C}\in\mathcal{S}}\left\Vert \rho_{C}-\sigma_{C}\right\Vert
_{1}\leq\delta_{c}.
\]

\item No:\ All fully separable states are at least $\delta_{s}%
$-far from $\rho_C$ in 1-LOCC$\ $distance:%
\[
\min_{\sigma_{C}\in\mathcal{S}}\left\Vert \rho_{C}-\sigma_{C}\right\Vert
_{1-\text{LOCC}}\geq\delta_{s}.
\]

\end{enumerate}
\end{problem}
The promise on negative instances is in terms of the multipartite
1-LOCC distance defined in \cite{BH12}:
\begin{equation}
\left\Vert \rho_{A_1 : \, \cdots \, : A_l}-\sigma_{A_1 : \, \cdots \, : A_l} \right\Vert _{1-\text{LOCC}}
\equiv
\max_{\Lambda_2, \cdots, \Lambda_l} \left\Vert \left(  I_{A_1}\otimes\Lambda_2
\otimes \cdots \otimes  \Lambda_l \right) \left(  \rho_{A_1 : \, \cdots \, : A_l}
-\sigma_{A_1 : \, \cdots \, : A_l}\right)  \right\Vert
_{1}, \label{eq:multi-1-LOCC}
\end{equation}
where $\Lambda_2, \cdots, \Lambda_l$ are quantum-to-classical channels.

\begin{theorem}
\label{thm:QMULTISEP-in-QIP(2)}\textsf{MULTI-QSEP-STATE}$_{\operatorname{1,1-LOCC}}$$\left(  \delta_{c},\delta
_{s}\right)  $\textsf{\ }$\in\ $\textsf{QIP(2)} if there are polynomial-time
computable functions $\delta_{c},\delta_{s}:\mathbb{N}\rightarrow\left[
0,1\right]  $, such that the difference $\delta_{s}^2 / 8  -2\sqrt{\delta_{c}}$
is larger than an inverse polynomial in the circuit size.\end{theorem}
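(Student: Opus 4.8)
The plan is to transcribe the proof of Theorem~\ref{thm:QSEP-in-QIP(2)} to the $l$-partite setting, replacing the single bipartite permutation test with $l$ independent permutation tests (one per party) and replacing the approximate-extendibility bound of Lemma~\ref{cor:contra-approx-k-ext} with the multipartite quantum de~Finetti theorem of Brand\~{a}o and Harrow~\cite{BH12}. Concretely, the two-message proof system would run as follows: the verifier executes the given circuit to prepare a purification $\ket{\psi_{\rho}}_{RC}$ of $\rho_{C}$ and sends the reference system $R$ to the prover; the prover adjoins ancilla qubits and applies a unitary $P_{1}$ whose output systems are $R^{\prime}$ together with $C_{2},\ldots,C_{k}$, where each $C_{i}$ comprises subsystems $C_{i,1},\ldots,C_{i,l}$ isomorphic to $A_{1},\ldots,A_{l}$; the prover returns $C_{2},\ldots,C_{k}$ to the verifier. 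For each party $j\in\{1,\ldots,l\}$ the verifier then performs phase estimation over $S_{k}$ (a permutation test) on the systems $C_{1,j},C_{2,j},\ldots,C_{k,j}$, controlled by a fresh register $C^{(j)}$ of $\left\lceil \log_{2}\left(  k!\right)  \right\rceil$ qubits, and accepts if and only if every control register $C^{(1)},\ldots,C^{(l)}$ is measured in the all-zero state. Since $k$ will be shown to be polynomial in the circuit size, the verifier runs in polynomial time.

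For a YES instance there is a fully separable $\sigma_{C}\in\mathcal{S}$ with $\left\Vert \rho_{C}-\sigma_{C}\right\Vert _{1}\leq\delta_{c}$. Writing $\sigma_{C}$ as in (\ref{eq:fully-sep}), it admits the $k$-extension displayed just before Problem~\ref{prob:QMULTISEP-CIRCUIT}, with purification $\ket{\phi_{\sigma,k}}_{R^{\prime}CC_{2}\cdots C_{k}}$; this state is invariant under all $l\cdot k!$ permutations and hence is accepted with certainty by the product of permutation tests. By Uhlmann's theorem~(\ref{eq:uhlmann-thm}) and the Fuchs--van-de-Graaf inequalities~(\ref{eq:FvG-ineqs}), there is a purification $\ket{\psi_{\sigma}}_{RC}$ of $\sigma_{C}$ with $\left\Vert \left\vert \psi_{\rho}\right\rangle \left\langle \psi_{\rho}\right\vert _{RC}-\left\vert \psi_{\sigma}\right\rangle \left\langle \psi_{\sigma}\right\vert _{RC}\right\Vert _{1}\leq 2\sqrt{\delta_{c}}$, so the prover simply applies the unitary taking $\ket{\psi_{\rho}}$ to the purified $k$-extension of $\sigma_{C}$, and the computation in (\ref{eq:QSEP-YES-analysis}) then yields acceptance probability at least $1-2\sqrt{\delta_{c}}$.

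For a NO instance I would argue in the same two stages. First, by the \textsf{CLOSE-IMAGE} characterization~(\ref{eq:fidelity-QIP(2)}): the verifier's first circuit prepares $\rho_{C}$ together with $l$ copies of the ``perm'' state~(\ref{eq:perm-state}), and the inverse of the verifier's second circuit realizes the channel that, independently for each $j$, applies a uniformly random permutation of $C_{1,j},\ldots,C_{k,j}$ and then discards $C_{2},\ldots,C_{k}$ and the control registers. Discarding the controls can only increase the fidelity, and the per-party symmetrization forces the optimal channel input to be permutation-symmetric within each party, i.e.\ a multipartite $k$-extension; hence the maximum acceptance probability is at most $\max_{\sigma_{C}\in\mathcal{E}_{k}}F\left(  \rho_{C},\sigma_{C}\right)$. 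Second, I need the multipartite analogue of Lemma~\ref{cor:contra-approx-k-ext}: if $\min_{\sigma_{C}\in\mathcal{S}}\left\Vert \rho_{C}-\sigma_{C}\right\Vert _{1\text{-LOCC}}\geq\delta_{s}$ in the norm~(\ref{eq:multi-1-LOCC}), then $\min_{\sigma_{C}\in\mathcal{E}_{k}}\left\Vert \rho_{C}-\sigma_{C}\right\Vert _{1}\geq\delta_{s}^{\prime}$ for any $\delta_{s}^{\prime}<\delta_{s}$, provided $k$ is a sufficiently large polynomial in $l$, in $\log\max_{j}\dim(\mathcal{H}_{A_{j}})$, and in $1/(\delta_{s}-\delta_{s}^{\prime})$. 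This is exactly what the multipartite de~Finetti bound of~\cite{BH12} supplies, since it controls the 1-LOCC distance of a $k$-extendible $l$-partite state from the fully separable set with only polynomial dependence on the local dimensions and the number of parties. Taking $\delta_{s}^{\prime}=\delta_{s}/\sqrt{2}$ keeps $k$ polynomial in the circuit size, and then $F(\rho,\sigma)\leq 1-\tfrac{1}{4}\left\Vert \rho-\sigma\right\Vert _{1}^{2}$ gives acceptance probability at most $1-(\delta_{s}^{\prime})^{2}/4=1-\delta_{s}^{2}/8$, as in (\ref{eq:bipartite-NO-analysis-1})--(\ref{eq:bipartite-NO-analysis-3}). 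The resulting gap $\delta_{s}^{2}/8-2\sqrt{\delta_{c}}$ is inverse polynomial by hypothesis and can be amplified as in~\cite{JUW09}, placing \textsf{MULTI-QSEP-STATE}$_{\operatorname{1,1-LOCC}}$ in \textsf{QIP(2)}.

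The main obstacle is the second stage of the NO-instance analysis: extracting from~\cite{BH12} a clean quantitative statement of approximate multipartite $k$-extendibility in exactly the norm~(\ref{eq:multi-1-LOCC}), with $k$ provably polynomial in the circuit size --- in particular, verifying that the dependence on the number of parties $l$ and on the local dimensions is mild enough, since the prover's extension enlarges the state by a factor of $k$ and the control registers cost $l\left\lceil \log_{2}\left(  k!\right)  \right\rceil$ qubits. The remaining steps are a routine transcription of the bipartite argument underlying Theorem~\ref{thm:QSEP-in-QIP(2)}.
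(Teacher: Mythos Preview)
Your proposal is correct and follows essentially the same route as the paper: a two-message proof system in which the verifier prepares a purification of $\rho_{C}$, the prover returns alleged extension systems $C_{2},\ldots,C_{k}$, and the verifier performs a permutation test; the YES analysis is identical to the bipartite case, and the NO analysis bounds the acceptance probability by the maximum $k$-extendible fidelity and then invokes the multipartite de~Finetti theorem of~\cite{BH12} (stated in the paper as Lemma~\ref{cor:contra-approx-multi-k-ext}, with $k=\lceil l+4l^{2}\log|C|/(\delta_{s}-\delta_{s}')^{2}\rceil$). The only cosmetic difference is that you run $l$ independent permutation tests (one per party, over $S_{k}$, total control cost $l\lceil\log_{2}k!\rceil$ qubits), whereas the paper uses a single control register of $\lceil\log_{2}(l\cdot k!)\rceil$ qubits ranging over the $l\cdot k!$ per-party permutations from the definition of multipartite $k$-extendibility; your version makes the symmetrization-over-$(S_{k})^{l}$ argument in the NO case slightly cleaner, but the two implementations are interchangeable for the purposes of the proof.
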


\begin{proof}
The proof system for \textsf{MULTI-QSEP-STATE}$_{\operatorname{1,1-LOCC}}$ is similar to that of \textsf{QSEP-STATE}$_{\operatorname{1,1-LOCC}}$,
in the sense that it amounts to a quantum computational test for multipartite $k$-extendibility.
We exploit a generalized version of Lemma~\ref{cor:contra-approx-k-ext} that applies to
multipartite states. This lemma follows from Theorem~2 of
Brand{\~{a}}o and Harrow \cite{BH12}, and we provide a proof for it in the appendix.

\begin{lemma}
\label{cor:contra-approx-multi-k-ext}Let $\rho_{C}$ be $\varepsilon$-away in
one-way LOCC distance from the set of fully separable states, for some
$\varepsilon>0$:%
\[
\min_{\sigma_{C} \in\mathcal{S}}\left\Vert \rho_{C}-\sigma_{C}\right\Vert
_{\text{1-LOCC}}\geq\varepsilon.
\]
Then the state $\rho_{C}$ is $\delta$-away in trace distance from the set of
$k$-extendible states:%
\[
\min_{\sigma_{C}\in\mathcal{E}_{k}}\left\Vert \rho_{C}-\sigma_{C}%
\right\Vert _{1}\geq\delta,
\]
for $\delta < \varepsilon$ and where%
\[
k=\left\lceil l + \frac{4 l^2\log\left\vert C\right\vert }{(\varepsilon
-\delta)^2  }\right\rceil .
\]
\end{lemma}

Figure~\ref{fig:qmultisep-qip2}\ depicts a two-message quantum interactive proof
system for \textsf{MULTI-QSEP-STATE}$_{\operatorname{1,1-LOCC}}$. The protocol begins with the verifier
preparing the state $\left\vert \psi_{\rho}\right\rangle _{RC}$, a particular
purification of $\rho_{C}$, by running the quantum circuit $U_{\rho}
$ as given in the problem instance. The verifier transmits the reference system 
to the prover, who then acts on $R$ and some
ancillary qubits with a unitary $P_{1}$\ that has output systems $R^{\prime}$,
$C_{2}$, \ldots, $C_{k}$. The prover transmits systems $C_{2}$, \ldots,
$C_{k}$ to the verifier. The verifier then performs phase estimation over the
symmetric group \cite{K95,BBDEJM97} (also known as the \textquotedblleft
permutation test\textquotedblright\ \cite{KNY08}) on the registers $C$,
$C_{2}$, \ldots, $C_{k}$, using the qubits in system $D$ as the control.
This control register requires $\lceil \log(l\cdot k!) \rceil$ qubits because
 the permutations included in the test are those from the definition of multipartite
$k$-extendibility. The
verifier performs a computational basis measurement on all of the qubits in
the control register $D$ and accepts if and only if the measurement outcome
is all zeros. This protocol operates nearly exactly as the bipartite case does, except that
the verifier asks for $k$-extensions of all the systems instead of just one. The analysis
of the protocol proceeds almost identically to the analysis in Section \ref{sec:qsep-in-qip2},
however we state it here explicitly for clarity.
\begin{figure}
[ptb]
\begin{center}
\includegraphics[
natheight=2.040100in,
natwidth=4.427000in,
width=4.478in
]%
{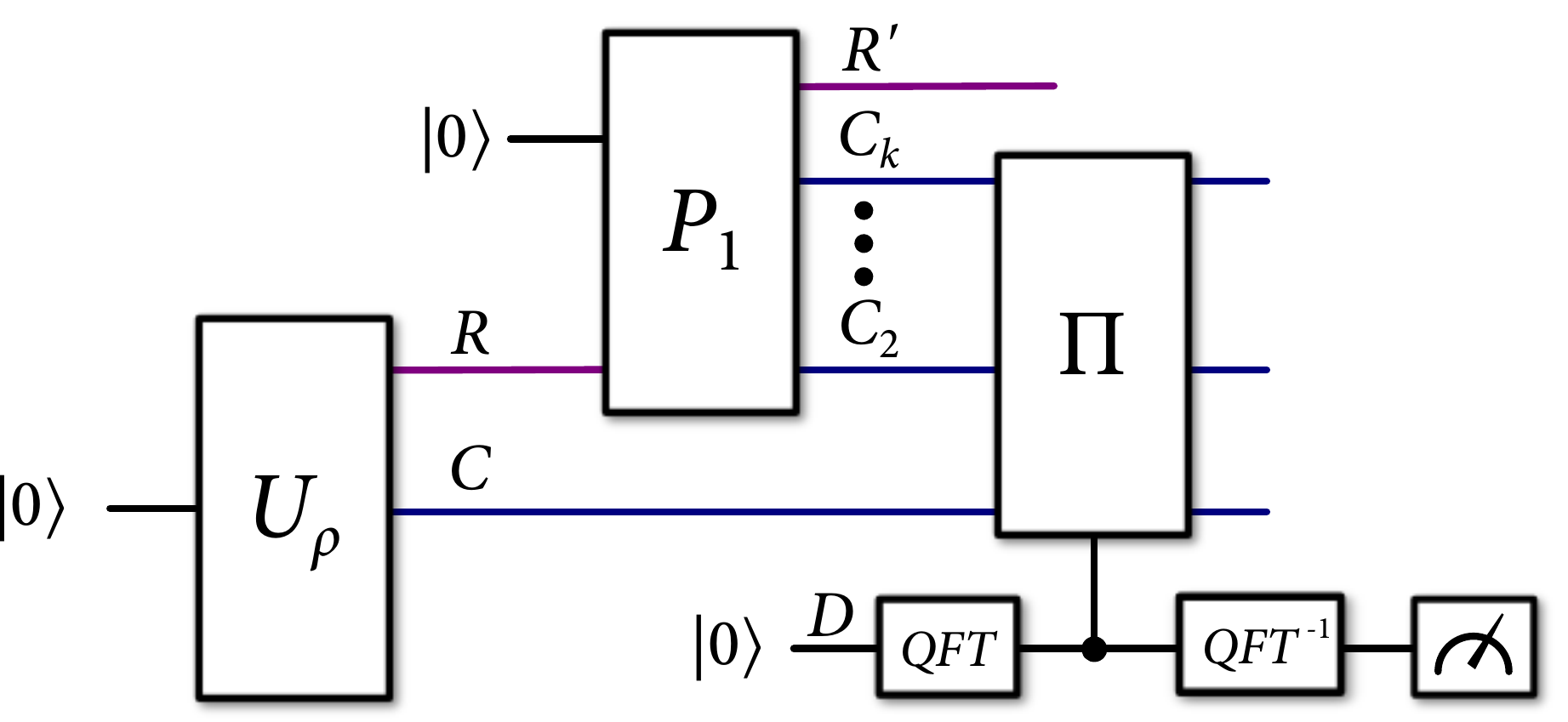}%
\caption{A two-message quantum interactive proof system for
\textsf{MULTI-QSEP-STATE}$_{\operatorname{1,1-LOCC}}$. As in the bipartite case, the
protocol begins with the verifier executing the circuit
$U_{\rho}$ that generates the state $\rho_{C}$. He sends the reference system
to the prover. In the case that $\rho_{C}$ is fully separable, the prover should be
able to act with a unitary on the reference system and some ancillas in order
to generate a multipartite $k$-extension of $\rho_C$ to 
the systems $C_2$ through $C_k$. The prover sends
all of the extension systems back to the verifier, who then performs phase
estimation over the symmetric group (a quantum Fourier transform followed by a
controlled permutation) in order to test if the state sent by the prover is
a multipartite $k$-extension.} 
\label{fig:qmultisep-qip2}%
\end{center}
\end{figure}

 In a YES instance, 
there is some fully separable state $\sigma_C \in \mathcal{S}$ that is $\delta_c$-close in 
trace distance to $\rho_C$. By Uhlmann's theorem in (\ref{eq:uhlmann-thm}) and the
Fuchs-van-de-Graff inequalities in (\ref{eq:FvG-ineqs}), there exists a purification $\ket{\psi_\sigma}$
of $\sigma_C$ such that 
\begin{equation}\label{eq:multiYES-instance-err}
\left\Vert \ket{\psi_\rho}\bra{\psi_\rho}_{RC} - \ket{\psi_\sigma}\bra{\psi_\sigma}_{RC}\right\Vert
\le 2\sqrt{\delta_c}.
\end{equation}
As such, the prover can operate by providing a $k$-extension for the separable state $\sigma_C$
instead, giving a lower bound on the probability that the verifier accepts. Letting $U$ be the unitary
that includes the prover's operation $P_1$ and the verifier's permutation test, we have that
\begin{equation*}
\operatorname{Tr}\left\{ \ket{0}\bra{0}_D U ( \ket{\psi_\rho}\bra{\psi_\rho}_{RC} ) U^\dagger \right\}
\ge 1 - 2\sqrt{\delta_c},
\end{equation*}
where the inequality follows exactly the same line of reasoning as the steps in (\ref{eq:QSEP-YES-analysis})
by exploiting (\ref{eq:multiYES-instance-err}) instead.

The analysis of a NO instance also proceeds similarly to that of the bipartite case, but the 
channel generated by the inverse of the verifier's circuit conditional on accepting is now given by
\[
\mathcal{M}_{CC_{2}\cdots C_{k}\rightarrow CD}\left(  \sigma_{CC_{2}\cdots
C_{k}}\right)  \equiv\text{Tr}_{C_{2}\cdots C_{k}}\left\{  \left(  U_{\Pi
}\right)  _{CC_{2}\cdots C_{k}D}\left(  \sigma_{CC_{2}\cdots C_{k}}%
\otimes\left\vert \text{perm}\right\rangle \left\langle \text{perm}\right\vert
_{D}\right)  \left(  U_{\Pi}^{\dag}\right)  _{CC_{2}\cdots C_{k}D}\right\}  ,
\]
where $(U_\Pi)_{CC_2\cdots C_kD}$ is a controlled-permutation operation 
defined similarly to that in (\ref{eq:controlled-perm}), and the permutations involved
are those from the definition of multipartite $k$-separability.
Also, $\left\vert \text{perm}\right\rangle$ is defined similarly to (\ref{eq:perm-state}), though
it is a uniform superposition of all possible $l\cdot k!$ permutations required from
the definition of multipartite $k$-extendibility.
Tracing out the control register $D$ 
gives the following upper bound on the acceptance probability:
\begin{equation} \label{eq:multifidelity-QIP(2)}
\max_{\sigma_{CC_2\cdots C_k}} F(\rho_{C}, \mathcal{M}_{CC_2\dots C_k\rightarrow C}(\sigma_{CC_2\cdots C_k}),
\end{equation}
where $\mathcal{M}_{CC_2\dots C_k\rightarrow C}$
is a channel that applies a permutation selected at random from the multipartite
$k$-extendibility permutations and  then
discards the systems $C_2$, \ldots, $C_k$. As in the bipartite case, since the channel 
$\mathcal{M}_{CC_2\cdots C_k\rightarrow C}$ symmetrizes the state of the 
systems $CC_2\cdots C_k$, the maximum in (\ref{eq:multifidelity-QIP(2)}) is 
achieved by a state $\sigma_{CC_2\cdots C_k}$ which is permutation symmetric 
with respect to the multipartite
$k$-extendibility permutations. As such, we can rewrite (\ref{eq:multifidelity-QIP(2)}) as the 
maximum $k$-extendible fidelity of $\rho_{C}$:
\begin{equation}
\max_{\sigma_{CC_2\cdots C_k}} F(\rho_{C}, \mathcal{M}_{CC_2\dots C_k\rightarrow C}(\sigma_{CC_2\cdots C_k}) = \max_{\sigma_{C} \in \mathcal{E}_k} F(\rho_{C}, \sigma_{C}).
\end{equation}

By Lemma~\ref{cor:contra-approx-multi-k-ext}, if we take $k$ to be larger than 
\[
\left\lceil l + \frac{4 l^2\log\left\vert C\right\vert }{(\delta_s-\delta'_s)^2  }\right\rceil ,
\]
then 
\[ \min_{\sigma_{C} \in \mathcal{E}_k} \left\Vert \rho_{C} - \sigma_{C} \right\Vert_1 \ge \delta'_s, \]
for $\delta'_s$ strictly less than $\delta_s$, which we can enforce by setting 
$\delta'_s = \delta_s/\sqrt{2}$. Observe that $k$ is polynomial in the circuit size
because the number of parties cannot exceed the number of qubits upon which the circuit acts,
it is only linear in the number of qubits in system $C$, and the promise guarantees that
$1/\delta_s^2$ is polynomial in the circuit size. Then, using the same analysis as in the bipartite 
case in \eqref{eq:bipartite-NO-analysis-1}-\eqref{eq:bipartite-NO-analysis-3},
we have that
\begin{equation*}
\max_{\sigma_{C} \in \mathcal{E}_k} F(\rho_{C}, \sigma_{C}) \le 1 - \delta^2_s/8.
\end{equation*}

In the above we have obtained the same separation between completeness and soundness error
as in the bipartite case. As 
discussed in Section~\ref{sec:qsep-in-qip2}, this is sufficient to place the protocol in 
\textsf{QIP(2)}. (See Section 3.2 of Ref. \cite{JUW09} for how to amplify an inverse 
polynomial gap.) Thus, we have given a two-message quantum interactive proof 
system that decides the multipartite quantum separability problem.
\end{proof}

Note that \textsf{MULTI-QSEP-STATE}$_{\operatorname{1,1-LOCC}}$ is also \textsf{QSZK}- and \textsf{NP}-hard,
as the bipartite separability problem
is merely a special case of the multipartite separability problem.
Also, it should be clear that if we define a ``channel'' variant of 
\textsf{MULTI-QSEP-STATE}$_{\operatorname{1,1-LOCC}}$ that is the natural combination of \textsf{MULTI-QSEP-STATE}$_{\operatorname{1,1-LOCC}}$
and \textsf{QSEP-CHANNEL}$_{\operatorname{1,1-LOCC}}$, such a promise problem is \textsf{QIP}-complete by the analysis
in this and the previous section.

\section{Conclusion}

\label{sec:conclusion}We have provided the first nontrivial example of a
promise problem that is in \textsf{QIP(2)}\ and hard for both \textsf{QSZK}%
\ and \textsf{NP}. We accomplished this by introducing a version of the
quantum separability problem in which the input string is the specification of
a quantum circuit that generates a mixed bipartite state $\rho_{AB}$, along
with a promise that the state is close in trace distance to some separable
state or there is no separable state close to it in $1$-LOCC\ distance. We
showed that this promise problem, called \textsf{QSEP-STATE}$_{\operatorname{1,1-LOCC}}$, is decidable
by a two-message quantum interactive proof system, and we also proved that it
is hard for quantum statistical zero knowledge (\textsf{QSZK}) proof systems.
Our results in Section~\ref{sec:qsep-np-hard}\ also demonstrate that
\textsf{QSEP-STATE}$_{\operatorname{1,1-LOCC}}$ is hard for \textsf{NP} with respect
to Cook reductions. Finally, we considered a
natural variation of the quantum separability problem, in which the circuit
generates a channel rather than a state, and the goal is to determine if there
is an input to the circuit for which the output across some bipartite cut is
separable. The \textquotedblleft channel quantum separability
problem\textquotedblright\ is complete for \textsf{QIP}, the class of promise
problems decidable by general quantum interactive proof systems.
Furthermore, we have shown that a two-message quantum interactive proof system
can decide  a variant of the multipartite quantum separability problem in which 
the input and promises are similar to those from \textsf{QSEP-STATE}$_{\operatorname{1,1-LOCC}}$.

Some previous works have related separability testing to variants of the complexity
class \textsf{QMA}(2)
(quantum Merlin-Arthur proof systems with two unentangled provers)
 \cite{B10,HM13}.
Indeed, the class \textsf{QMA}(2) in some sense captures the
complexity of optimizing over the set of separable states (see Section~3.1 of
 \cite{HM13}---Section~4.1 in the arXiv version).
Although the variant of separability testing that we 
consider here is quite different, in both cases the problem
seems to be naturally captured by a quantum complexity class. However, in a follow-up paper,
we demonstrate that there {\it is} a variant of the quantum separability problems
considered here that is complete for the class \textsf{QMA}(2) \cite{MGHW13}, and thus
it is this latter problem that is related to the previous works mentioned above.

Not much is currently known about two-message quantum interactive proof
systems (\textsf{QIP(2)}), other than a nontrivial lower bound on it given by
Wehner \cite{W06}\ and the containment \textsf{QSZK~}$\subseteq~$%
\textsf{QIP(2)} \cite{W02,W09zkqa}. Also, \textsf{CLOSE-IMAGE} is a complete
promise problem for \textsf{QIP(2)}, but it really just amounts to a trivial
rewriting of the definition of \textsf{QIP(2)}, much like the relationship
between \textsf{CLOSE-IMAGES}\ and \textsf{QIP(3)} \cite{KW00,RW05,R09}. The
promise problem \textsf{QSEP-STATE}$_{\operatorname{1,1-LOCC}}$\ seems to have a natural two-message
quantum interative proof system that does not satisfy the zero-knowledge
property (if the verifier in our protocol were able to generate a
$k$-extension himself, then he would be able to solve the problem efficiently
in \textsf{BQP}, but our hardness results suggest that this should not be the
case). Thus, it seems possible that one might be able to show that
\textsf{QSEP-STATE}$_{\operatorname{1,1-LOCC}}$\ is in fact \textsf{QIP(2)}-complete, but a proof evades
us for now. It is nonetheless suggestive that a simple variation of \textsf{QSEP-STATE}$_{\operatorname{1,1-LOCC}}$
(\textsf{QSEP-CHANNEL}$_{\operatorname{1,1-LOCC}}$) leads to a \textsf{QIP}-complete promise problem.
Also, since it is unclear how to begin to show that \textsf{QSEP-STATE}$_{\operatorname{1,1-LOCC}}$\ is
\textsf{QIP(2)}-hard,\ it would be desirable to show that
\textsf{QSEP-STATE}$_{\operatorname{1,1-LOCC}}$\ is \textsf{QMA}-hard, given that \textsf{QMA~}%
$\subseteq~$\textsf{QIP(2)} and \textsf{QSZK~}$\subseteq~$\textsf{QIP(2)}, and
\textsf{QMA}\ and \textsf{QSZK}\ are not known to be commensurate complexity classes.

\section{Acknowledgements}

We acknowledge useful discussions with Fernando Brand{\~a}o, Anne Broadbent, Eric Chitambar, Claude
Cr\'epeau, Aram Harrow, and John Watrous. PH\ acknowledges support from the Canada Research
Chairs program, the Perimeter Institute, CIFAR, FQRNT's INTRIQ, NSERC and ONR
through grant N000140811249. The Perimeter Institute is supported by Industry
Canada and Ontario's Ministry of Economic Development \& Innovation.
MMW\ acknowledges support from the Centre de Recherches Math\'{e}matiques at
the University of Montreal.

\appendix

\section{Appendix}

\subsection{Approximate $k$-extendibility}

The following proposition applies to bipartite states that are approximately $k$-extendible:

\begin{proposition}
Let $\rho_{AB}$ be $\delta$-close$\ $to a $k$-extendible state, in the sense
that%
\begin{equation}
\min_{\sigma_{AB}\in\mathcal{E}_{k}}\left\Vert \rho_{AB}-\sigma_{AB}%
\right\Vert _{1}\leq\delta,\label{eq:delta-k-ext}%
\end{equation}
for some $\delta>0$, where $\mathcal{E}_{k}$ is the set of $k$-extendible
states. Then, the following bound holds%
\[
\left\Vert \rho_{AB}-\mathcal{S}\right\Vert _{\text{1-LOCC}}\leq\left(
16\ln2\left[  \frac{\log\left\vert A\right\vert }{k}\right]  \right)  ^{1/2} + \delta.
\]

\end{proposition}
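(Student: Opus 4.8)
The plan is to deduce this approximate statement from the exact $k$-extendibility bound of Brand{\~a}o, Christandl, and Yard (Theorem~3 of \cite{BCY11a}) by a single application of the triangle inequality, using the inequality $\left\Vert \ast\right\Vert _{\text{1-LOCC}}\leq\left\Vert \ast\right\Vert _{1}$ from (\ref{eq:difference-trace-LOCC}) to absorb the approximation error $\delta$. First I would fix a state $\sigma_{AB}\in\mathcal{E}_{k}$ attaining (or, up to arbitrarily small slack, nearly attaining) the minimum in (\ref{eq:delta-k-ext}), so that $\left\Vert \rho_{AB}-\sigma_{AB}\right\Vert _{1}\leq\delta$. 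Existence of a minimizer follows because $\mathcal{E}_{k}$ is a compact subset of the finite-dimensional real vector space of Hermitian operators, being the image under the continuous partial trace of the compact set of states on $\mathcal{H}_{A}\otimes\mathcal{H}_{B_{1}}\otimes\cdots\otimes\mathcal{H}_{B_{k}}$ that are permutation-invariant on $B_{1}\cdots B_{k}$; alternatively, one works with a near-minimizer and sends the slack to zero at the end.

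Since $\sigma_{AB}$ is genuinely $k$-extendible, Theorem~3 of \cite{BCY11a} supplies a separable state $\tau_{AB}\in\mathcal{S}$ with
\[
\left\Vert \sigma_{AB}-\tau_{AB}\right\Vert _{\text{1-LOCC}}\leq\left(16\ln 2\ \frac{\log\left\vert A\right\vert }{k}\right)^{1/2}.
\]
This is the estimate obtained in \cite{BCY11a} by combining their Pinsker-type lower bound on the squashed entanglement with the monogamy bound forcing the squashed entanglement of a $k$-extendible state to be $O(\log\left\vert A\right\vert/k)$. Because $\left\Vert \ast\right\Vert _{\text{1-LOCC}}$ obeys the triangle inequality, I would then write
\[
\left\Vert \rho_{AB}-\mathcal{S}\right\Vert _{\text{1-LOCC}}\leq\left\Vert \rho_{AB}-\tau_{AB}\right\Vert _{\text{1-LOCC}}\leq\left\Vert \rho_{AB}-\sigma_{AB}\right\Vert _{\text{1-LOCC}}+\left\Vert \sigma_{AB}-\tau_{AB}\right\Vert _{\text{1-LOCC}},
\]
bound the first term on the right by $\left\Vert \rho_{AB}-\sigma_{AB}\right\Vert _{1}\leq\delta$ using (\ref{eq:difference-trace-LOCC}), and bound the second by the displayed BCY estimate; this yields exactly the claimed inequality. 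Taking the contrapositive, together with the stated choice of $k$ from the arithmetic $(\varepsilon-\delta)^{2}\geq 16\ln 2\,\log\left\vert A\right\vert/k$, recovers Lemma~\ref{cor:contra-approx-k-ext}.

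I do not expect a genuine obstacle here. The only care needed is to confirm that Theorem~3 of \cite{BCY11a} is stated in exactly the form used above --- with the $1$-LOCC distance (rather than a different distinguishability norm or the relative entropy of entanglement) and with the constant and logarithm base matching --- so that appending $\delta$ afterward reproduces the constant verbatim. As the main text notes, one could equivalently obtain the result by inserting the $\delta$-perturbation directly into the chain of inequalities inside the proof of Theorem~3 of \cite{BCY11a}, rather than invoking it as a black box and adding $\delta$ by the triangle inequality; the two routes give the same bound.
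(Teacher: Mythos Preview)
Your proposal is correct and essentially the same argument as the paper's. The only cosmetic difference is that the paper, rather than invoking Theorem~3 of \cite{BCY11a} as a black box, re-runs the squashed-entanglement chain $\log|A|\geq E_{\text{sq}}(\sigma'_{A:B_1\cdots B_k})\geq kE_{\text{sq}}(\sigma'_{A:B})\geq \frac{k}{16\ln 2}\|\sigma'_{AB}-\mathcal{S}\|_{\text{1-LOCC}}^2$ explicitly before appending the triangle-inequality step with $\delta$; you yourself note at the end that these two routes are equivalent.
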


\begin{proof}
The proof of this theorem requires only a simple modification of the original
proof of Brand\~{a}o \textit{et al}. \cite{BCY11a}. First, recall that the
squashed entanglement of some bipartite state $\omega_{AB}$ is equal to
$E_{\text{sq}}\left(  \omega_{A:B}\right)  \equiv\frac{1}{2}\inf\left\{  I\left(
A;B|E\right)  _{\omega^{\prime}}:\omega_{AB}=\text{Tr}_{E}\left\{
\omega_{ABE}\right\}  \right\}  $ \cite{CW04}. Let $\sigma_{AB}^{\prime}$ be
the state that achieves the minimum in (\ref{eq:delta-k-ext}), and let
$\sigma_{AB_{1}\cdots B_{k}}^{\prime}$ be a $k$-extension of $\sigma
_{AB}^{\prime}$. (The fact that such a minimum exists follows from the
argument above (\ref{eq:max-exists-arg}).)\ Following Brand\~{a}o \textit{et
al}., observe that%
\begin{align*}
\log\left\vert A\right\vert  &  \geq E_{\text{sq}}(\sigma_{A:B_{1}\cdots
B_{k}}^{\prime})\\
&  \geq\sum_{i=1}^{k}E_{\text{sq}}(\sigma_{A:B_{i}}^{\prime})\\
&  =kE_{\text{sq}}(\sigma_{A:B}^{\prime})\\
&  \geq k\left[  \frac{1}{16\ln2}\min_{\sigma_{AB}\in\mathcal{S}}\left\Vert
\sigma_{AB}^{\prime}-\sigma_{AB}\right\Vert _{\text{1-LOCC}}^{2}\right]  .
\end{align*}
The second inequality is from monogamy of squashed entanglement. The final
inequality exploits a theorem of Brand\~{a}o \textit{et al}. The above implies
that%
\[
\sqrt{16\ln2\left(  \frac{\log\left\vert A\right\vert }{k}\right)  } \geq\min
_{\sigma_{AB}\in\mathcal{S}}\left\Vert \sigma_{AB}^{\prime}-\sigma
_{AB}\right\Vert _{\text{1-LOCC}}.
\]
Let $\sigma_{AB}^{\ast}$ be the state achieving the minimum in $\min
_{\sigma_{AB}\in\mathcal{S}}\left\Vert \sigma_{AB}^{\prime}-\sigma
_{AB}\right\Vert _{\text{1-LOCC}}$. We then have that%
\begin{align*}
\left\Vert \sigma_{AB}^{\prime}-\sigma_{AB}^{\ast}\right\Vert _{\text{1-LOCC}%
}+\delta & \geq\left\Vert \sigma_{AB}^{\prime}-\sigma_{AB}^{\ast}\right\Vert
_{\text{1-LOCC}}+\left\Vert \sigma_{AB}^{\prime}-\rho_{AB}\right\Vert _{1}\\
& \geq\left\Vert \sigma_{AB}^{\prime}-\sigma_{AB}^{\ast}\right\Vert
_{\text{1-LOCC}}+\left\Vert \sigma_{AB}^{\prime}-\rho_{AB}\right\Vert
_{\text{1-LOCC}}\\
& \geq\left\Vert \rho_{AB}-\sigma_{AB}^{\ast}\right\Vert _{\text{1-LOCC}}\\
& \geq\min_{\sigma_{AB}\in\mathcal{S}}\left\Vert \rho_{AB}-\sigma
_{AB}\right\Vert _{\text{1-LOCC}}\ .
\end{align*}
From this, we conclude the statement of the proposition.
\end{proof}

The following proposition applies to $l$-partite states $\rho_C = \rho_{A_1:A_2:\cdots :A_l}$ that are approximately $k$-extendible:
\begin{proposition}
Let $\rho_{C}$ be $\delta$-close$\ $to a $k$-extendible state, in the sense
that%
\begin{equation} \label{eq:delta-multi-k-ext}
\min_{\sigma_{C}\in\mathcal{E}_{k}}\left\Vert \rho_{C}-\sigma_{C}%
\right\Vert _{1}\leq\delta,%
\end{equation}
for some $\delta>0$, where $\mathcal{E}_{k}$ is the set of $k$-extendible
$l$-partite states. Then, the following bound holds%
\[
\left\Vert \rho_{C}-\mathcal{S}\right\Vert _{\text{1-LOCC}}\leq\sqrt{\frac{4 l^2 \log\left\vert C\right\vert }{k - l}} + \delta
\]
where the quantity on the left is multipartite 1-LOCC distance (defined in (\ref{eq:multi-1-LOCC})) to the
set of fully separable states. 
\end{proposition}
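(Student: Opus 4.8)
The plan is to follow the structure of the bipartite proposition above, with the squashed-entanglement monogamy argument replaced by a direct appeal to the multipartite quantum de~Finetti theorem of Brand\~{a}o and Harrow (Theorem~2 of \cite{BH12}), which plays here the role that the Brand\~{a}o--Christandl--Yard bound played in the bipartite case.

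First I would fix a state $\sigma_C' \in \mathcal{E}_k$ achieving the minimum in (\ref{eq:delta-multi-k-ext}); such a minimizer exists because $\mathcal{E}_k$ is the image under a partial trace of the compact set of states on $\mathcal{H}_C \otimes \mathcal{H}_{C_2} \otimes \cdots \otimes \mathcal{H}_{C_k}$ that are invariant under the permutations appearing in the definition of multipartite $k$-extendibility, hence is itself compact. By definition $\sigma_C'$ has a $k$-extension $\sigma_{C C_2 \cdots C_k}'$ with the required permutation symmetry. Next I would apply Theorem~2 of \cite{BH12} to $\sigma_{C C_2 \cdots C_k}'$: since that state is permutation symmetric in the multipartite sense, its single-copy reduced state $\sigma_C'$ is close to the set $\mathcal{S}$ of fully separable $l$-partite states in the multipartite one-way LOCC distance of (\ref{eq:multi-1-LOCC}), with
\[
\min_{\sigma_C \in \mathcal{S}} \Vert \sigma_C' - \sigma_C \Vert_{\text{1-LOCC}} \leq \sqrt{\frac{4 l^2 \log|C|}{k - l}}.
\]
Letting $\sigma_C^\ast \in \mathcal{S}$ attain this minimum, I would finish with the triangle inequality for the multipartite 1-LOCC norm together with the bound $\Vert \cdot \Vert_{\text{1-LOCC}} \leq \Vert \cdot \Vert_1$ (valid in the multipartite case for the same reason as in (\ref{eq:difference-trace-LOCC}), since a one-way-LOCC distinguishing protocol is a restriction of a general one):
\begin{align*}
\min_{\sigma_C \in \mathcal{S}} \Vert \rho_C - \sigma_C \Vert_{\text{1-LOCC}}
&\leq \Vert \rho_C - \sigma_C^\ast \Vert_{\text{1-LOCC}}
\leq \Vert \rho_C - \sigma_C' \Vert_{\text{1-LOCC}} + \Vert \sigma_C' - \sigma_C^\ast \Vert_{\text{1-LOCC}} \\
&\leq \Vert \rho_C - \sigma_C' \Vert_1 + \sqrt{\frac{4 l^2 \log|C|}{k-l}}
\leq \delta + \sqrt{\frac{4 l^2 \log|C|}{k-l}},
\end{align*}
which is the asserted bound.

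The main obstacle I anticipate is not conceptual but a matter of carefully matching conventions: I need to verify that the statement of Theorem~2 of \cite{BH12}, when specialized to the product symmetry of multipartite $k$-extendibility (each of the $l$ parties having its own block of $k$ interchangeable subsystems), yields exactly the constant and the $\sqrt{4 l^2 \log|C| / (k-l)}$ dependence on $l$, $\log|C|$ and $k$ claimed above, rather than some other normalization of the de~Finetti error. This is also precisely the step that pins down the value $k = \lceil l + 4 l^2 \log|C| / (\varepsilon - \delta)^2 \rceil$ in Lemma~\ref{cor:contra-approx-multi-k-ext}, since that lemma is just the contrapositive of this proposition with $\varepsilon - \delta$ set equal to $\sqrt{4 l^2 \log|C|/(k-l)}$. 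Everything else --- existence of the two minimizers, the triangle inequality, and the norm domination --- is entirely routine.
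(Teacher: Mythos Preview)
Your proposal is correct and follows essentially the same argument as the paper: both proofs choose a minimizing $\sigma_C'\in\mathcal{E}_k$, invoke Theorem~2 of \cite{BH12} to bound $\min_{\sigma_C\in\mathcal{S}}\Vert\sigma_C'-\sigma_C\Vert_{\text{1-LOCC}}$, and then finish with the triangle inequality for the 1-LOCC norm together with $\Vert\cdot\Vert_{\text{1-LOCC}}\le\Vert\cdot\Vert_1$. Your chain of inequalities is slightly more compactly written than the paper's, and your explicit compactness justification for the existence of the minimizer is a detail the paper omits, but the substance is identical.
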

\begin{proof}
Let $\sigma'_{C}$ be the state that achieves the minimum in (\ref{eq:delta-multi-k-ext}). Since this state is $k$-extendible, we have from Theorem~2 of \cite{BH12} that
\begin{equation} \label{eq:1locc-multi-sigma-min}
 \min_{\sigma_{C} \in \mathcal{S}} \left\Vert \sigma'_{C} - \sigma_{C} \right\Vert_{1-\textrm{LOCC}} \le \sqrt{\frac{4 l^2 \log\left\vert C\right\vert }{k - l}},
\end{equation} 
Let $\sigma^*_{C}$ be the state achieving the minimum on the left in (\ref{eq:1locc-multi-sigma-min}).
From the premise of the theorem, it follows that
\begin{align*}
\left\Vert \sigma'_{C} - \sigma^*_{C} \right\Vert_{1-\textrm{LOCC}} + \delta &> 
\left\Vert \sigma'_{C} - \sigma^*_{C} \right\Vert_{1-\textrm{LOCC}} + 
\left\Vert \sigma'_{C} - \rho_{C}\right\Vert_1\\
&\ge \left\Vert \sigma'_{C} - \sigma^*_{C} \right\Vert_{1-\textrm{LOCC}} + 
\left\Vert \sigma'_{C} - \rho_{C}\right\Vert_{1 - \textrm{LOCC}}\\
&\ge \left\Vert \sigma^*_{C} - \rho_{C} \right\Vert_{1-\textrm{LOCC}}\\
&\ge \min_{\sigma_{C} \in \mathcal{S}} \left\Vert \sigma_{C} - \rho_{C} \right\Vert_{1 - \textrm{LOCC}}.
\end{align*}
Thus,
\begin{align*}
\min_{\sigma_{C} \in \mathcal{S}} \left\Vert \sigma_{C} - \rho_{C} \right\Vert_{1 - \textrm{LOCC}} 
&< \left\Vert \sigma'_{C} - \sigma^*_{C} \right\Vert_{1-\textrm{LOCC}} + \delta\\
&\le \sqrt{\frac{4 l^2 \log\left\vert C\right\vert }{k - l}} + \delta,
\end{align*}
which concludes the proof.
\end{proof}

\bibliographystyle{alpha}
\bibliography{Ref}

\end{document}